\DeclareSymbolFont{cyrletters}{OT2}{wncyr}{m}{n}
\DeclareMathSymbol{\Sha}{\mathalpha}{cyrletters}{"58}
\tikzset{snake it/.style={decorate, decoration=snake}}
\newcommand{\eps}{\varepsilon}
\newcommand{\concat}{}
\newtheorem{thm}{Theorem}
\newtheorem{df}[thm]{Definition}
\newtheorem{lem}[thm]{Lemma}
\newtheorem{cor}[thm]{Corollary}
\newtheorem{rem}[thm]{Remark}
\newcommand{\abs}[1]{\lvert#1\rvert}
\newcommand{\mt}{\mathtt}
\newcommand{\ul}{\underline}
\newcommand{\formalLanguage}{\mathsf}
\newcommand{\REP}{\formalLanguage{REP}}
\newcommand{\SQ}{\formalLanguage{SQ}}
\newcommand{\complexityClass}{\mathbf}
\newcommand{\NC}{\complexityClass{NC}}
\newcommand{\coCFL}{\complexityClass{coCFL}}
\newcommand{\coSAC}{\complexityClass{coSAC}}
\newcommand{\CFL}{\complexityClass{CFL}}
\newcommand{\DCFL}{\complexityClass{DCFL}}
\newcommand{\E}{\complexityClass{E}}
\newcommand{\coNP}{\complexityClass{coNP}}
\renewcommand{\P}{\complexityClass{P}}
\renewcommand{\E}{\complexityClass{E}}
\newcommand{\SAC}{\complexityClass{SAC}}
\newcommand{\OSAC}{{\oplus\complexityClass{SAC}}}
\begin{document}
\title{Maximal automatic complexity and context-free languages}
\author{Bj{\o}rn Kjos-Hanssen\thanks{This work was partially supported by a grant from the Simons Foundation (\#704836 to Bj\o rn Kjos-Hanssen).}}
\maketitle

\begin{abstract}
Let $A_N$ denote nondeterministic automatic complexity and
\[
	L_{k,c}=\{x\in [k]^* : A_N(x)> \abs{x}/c\}.
\]
In particular, $L_{k,2}$ is the language of all $k$-ary words for which $A_N$ is maximal, while $L_{k,3}$ gives a rough dividing line between complex and simple.
Let $\CFL$ denote the complexity class consisting of all context-free languages.
While it is not known that $L_{2,2}$ is infinite, Kjos-Hanssen (2017) showed that $L_{3,2}$ is $\CFL$-immune but not $\coCFL$-immune.
We complete the picture by showing that $L_{3,2}\not\in\coCFL$.

Turning to Boolean circuit complexity, we show that $L_{2,3}$ is $\SAC^0$-immune and $\SAC^0$-coimmune.
Here $\SAC^0$ denotes the complexity class consisting of all languages
computed by (non-uniform) constant-depth circuits with semi-unbounded fanin.

As for arithmetic circuits, we show that $\{x:A_N(x)>1\}\not\in\OSAC^0$.
In particular, $\SAC^0\not\subseteq\oplus \SAC^0$, which resolves an open implication from the Complexity Zoo.
\end{abstract}

	\section{Introduction}
		Automatic complexity is a computable form of Kolmogorov complexity which was introduced by Shallit and Wang in 2001~\cite{MR1897300}.
		It was studied by Jordon and Moser in 2021~\cite{fct}, and in a series of papers by the author and his coauthors.
		Roughly speaking, the automatic complexity $A(x)$ of a string is the minimal number of states of an automaton accepting only $x$ among its equal-length peers.
		The nondeterministic version $A_N$ was introduced by Hyde~\cite{MR3386523}:
		\begin{df}\label{def:NFA}
			Let $\Sigma$ be finite a set called the \emph{alphabet}\index{alphabet} and let $Q$ be a finite set whose elements are called \emph{states}.\index{states}
			A \emph{nondeterministic finite automaton} (NFA)\index{NFA} is a 5-tuple
			\(
				M=(Q,\Sigma,\delta,q_0,F).
			\)
			The \emph{transition function}\index{transition function} $\delta:Q\times\Sigma\to\mathcal P(Q)$ maps each $(q,b)\in Q\times\Sigma$ to a subset of $Q$.
			Within $Q$ we find the \emph{initial state} $q_0\in Q$ and
			the set of \emph{final states} $F\subseteq Q$.
			The function $\delta$ is extended to a function $\delta^*:Q\times\Sigma^*\to\mathcal P(Q)$
			by\footnote{We denote concatenation by 
			juxtaposition, $\sigma\tau$.}
			\[
				\delta^*(q,\sigma\concat i)=\bigcup_{s\in \delta^*(q,\sigma)}\delta(s,i).
			\]
			Overloading notation we also write $\delta=\delta^*$.
			The \emph{language accepted by $M$} is
			\[
				L(M)=\{x \in \Sigma^*: \delta(q,x)\cap F\ne\emptyset\}.
			\]
			A \emph{deterministic finite automaton} (DFA)\index{DFA} is also a 5-tuple
			\(
				M=(Q,\Sigma,\delta,q_0,F).
			\)
			In this case, $\delta:Q\times\Sigma\to Q$ is a total function and is extended to $\delta^*$ by
			$\delta^*(q,\sigma\concat i)=\delta(\delta^*(q,\sigma),i)$.
			If the domain of $\delta$ is a subset of $Q\times\Sigma$, $M$ is a \emph{partial DFA}.
			Finally,
			the set of words accepted by $M$ is
			\[
				L(M)=\{x \in \Sigma^*: \delta(q,x)\in F\}.
			\]
		\end{df}
		\begin{df}[{\cite{MR3386523,MR1897300}}]\label{precise}
			Let $L(M)$ be the language recognized by the automaton $M$.
			The nondeterministic automatic complexity $A_N(w)$ of a word $w$ is the minimum number of states of an NFA $M$
			such that $M$ accepts $w$ and the number of paths along which $M$ accepts words of length $\abs{w}$ is 1.
		\end{df}
		The fundamental upper bound on $A_N$ was discovered by Kayleigh Hyde in 2013:
		\begin{thm}[{Hyde~\cite{MR3386523}}]\label{Hyde}
			Let $x$ be any word of length $n\in\mathbb N$.
			Then
			\[
				A_N(x) \le {\lfloor} n/2 {\rfloor} + 1\text{.}
			\]
		\end{thm}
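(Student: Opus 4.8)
The plan is to prove the bound by an explicit construction: for every word $x$ of length $n$ I will build an NFA $M_x$ with $\lfloor n/2\rfloor+1$ states that accepts $x$ and for which the number of length-$n$ accepting paths is exactly $1$. It is cleanest to treat odd and even lengths separately, reducing the even case to the odd one; the trivial case $n=0$ is handled by hand (one state, no transitions, initial and final).

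For $n=2m+1$ odd I would use a ``pyramid with a loop at the apex'': states $\{0,1,\dots,m\}$ with $0$ both initial and final, ascending transitions $i\xrightarrow{x_{i+1}}i+1$ for $0\le i\le m-1$ (reading $x_1\cdots x_m$), a self-loop $m\xrightarrow{x_{m+1}}m$, and descending transitions $i\to i-1$ for $1\le i\le m$ reading $x_{m+2},\dots,x_{2m+1}$ in that order (so $i\to i-1$ carries the label $x_{2m+2-i}$). The walk $0\to1\to\cdots\to m\to m\to m-1\to\cdots\to0$ reads $x$, so $M_x$ accepts $x$. Note this already subsumes $n=1$ (take $m=0$: a single state with self-loop $x_1$).

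The crux is uniqueness, which I would establish purely combinatorially in the underlying transition graph, forgetting the labels. Any walk of length $2m+1$ from $0$ to $0$ uses $a$ ascending, $b$ descending and $c$ loop steps with $a=b$ (it is closed) and $a+b+c=2m+1$; hence $c$ is odd, so $c\ge1$. Using the loop forces the walk to reach height $m$, which forces $a\ge m$, whence $c=2m+1-2a\le1$; so $c=1$ and $a=b=m$. A $\pm1$ excursion of length $2m$ from $0$ to $0$ attaining height $m$ must consist of $m$ up-steps followed by $m$ down-steps, and the unique loop step must be inserted at the one instant the walk sits at the apex; since each height carries exactly one ascending, one descending and one loop edge, the entire edge sequence is forced. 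Consequently every length-$n$ accepting path traverses this single walk, which reads exactly $x$, so on input $x$ there is precisely one accepting path and on any other length-$n$ input there are none; thus $A_N(x)\le m+1=\lfloor n/2\rfloor+1$.

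For $n=2m$ even I would apply the odd construction to the prefix $x_1\cdots x_{2m-1}$ — a word of odd length $2m-1=2(m-1)+1$, needing states $\{0,\dots,m-1\}$ — and then adjoin a fresh ``pendant'' sink state $p$ with the single new transition $0\xrightarrow{x_{2m}}p$, making $p$ (and no longer $0$) the only final state; this uses $m+1=\lfloor n/2\rfloor+1$ states. Since $p$ is a sink reached by a unique in-edge, any length-$n$ accepting path takes $0\xrightarrow{x_{2m}}p$ as its last step and spends its first $2m-1$ steps as a walk from $0$ to $0$ inside the sub-automaton; by the odd case that walk is unique and reads $x_1\cdots x_{2m-1}$, so the full path is unique and reads $x$. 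I expect the main obstacle to be getting this uniqueness bookkeeping exactly right — in particular, recognising that the naive ``ascend then descend'' automaton on all $n$ letters fails for even $n$ (it admits two accepting paths on $abab$, namely $0\to1\to2\to1\to0$ and $0\to1\to0\to1\to0$), which is precisely why the apex self-loop is essential and why one extra pendant state must be spent when $n$ is even.
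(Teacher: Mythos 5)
The paper states this bound without proof, citing Hyde \cite{MR3386523}, and your construction is correct and is essentially Hyde's original one: the odd case via the ``ascend, loop at the apex, descend'' automaton on $\lfloor n/2\rfloor+1$ states with the counting argument ($a=b$, $c$ odd, $c\le 1$) forcing a unique length-$n$ accepting walk, and the even case by reduction to the odd case at the cost of one extra state. Your uniqueness bookkeeping and your observation that the na\"ive construction fails for even $n$ are both accurate, so there is nothing to add.
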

		The sharpness of \Cref{Hyde} over a ternary alphabet is also known~\cite{MR3386523}.
		Therefore, if $A_N(x) < {\lfloor} n/2 {\rfloor} + 1$, we say that $x$ is \emph{$A_N$-simple} (see \Cref{april21-2020}).

		The question whether the computation of automatic complexity is itself computationally tractable was raised by Shallit and Wang and also by Allender~\cite{MR3629714}.
		Here we show that a natural decision problem associated with automatic complexity does not belong to the complexity class
		$\SAC^0$ consisting of languages accepted by constant-depth Boolean circuits with semi-unbounded fan-in.
		We also confirm~\cite[Conjecture 11]{MR3712310} by showing that the set of maximally complex words over a ternary alphabet is not co-context free.

	\section{Context-free languages}

		A word is called \emph{squarefree} if it cannot be written as $xyyz$ where $y$ is nonempty.
		In ruling out context-freeness of the set of words of nonmaximal nondeterministic automatic complexity, our method of proof goes back to the 1980s.
		It consists in analyzing the proof in~\cite{MR784746}
		in the form given in Shallit 2008~\cite{Shallit:2008:SCF:1434864}, to conclude that it pertains to any language $L$ with $\SQ_3\subseteq L\subseteq\REP_3$.
		Here $\SQ_k$ is the set of square words over the alphabet $[k]$ and $\REP_k$ is the set of words over $[k]$ that are repetitive, i.e., not squarefree.

		Our key tool will be the Interchange \Cref{lem:inter}, originally due to Ehrenfeucht and Rozenberg~\cite{MR701985}; see also Berstel and Boasson~\cite{MR1127187}.
		We denote the length of a word $x$ by $\abs{x}$, and let $[k]=\{0,1,\dots,k-1\}$ for a nonnegative integer $k$.

		\begin{lem}[Interchange lemma~\cite{MR784746}]\label{lem:inter}
			For each $L\in\CFL$ there is an integer $c>0$ such that for all integers $n\ge 2$, all subsets $R\subseteq L\cap\Sigma^n$,
			and all integers with $2\le m\le n$, there exists a subset $Z\subseteq R$, $Z=\{z_1,z_2,\dots,z_k\}$ such that
			$k\ge\frac{\abs{R}}{c(n+1)^2}$ and such that there exist decompositions $z_i=w_i x_i y_i$, $1\le i\le k$, such that for all $1\le i,j\le k$,
			\begin{enumerate}[(a)]
				\item $\abs{w_i}=\abs{w_j}$,
				\item $\abs{y_i}=\abs{y_j}$,
				\item $\frac{m}2<\abs{x_i}=\abs{x_j}\le m$; and
				\item $w_i x_j y_i\in L$.
			\end{enumerate}
		\end{lem}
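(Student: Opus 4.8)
The plan is to follow the classical parse-tree/pigeonhole argument. First I would fix a context-free grammar $G=(V,\Sigma,P,S)$ for $L\setminus\{\eps\}$ in Chomsky normal form; this is legitimate because the statement only concerns words of length $n\ge 2$, which are unaffected by deleting $\eps$. I set $c:=\abs V$ once and for all --- this is the only place the grammar, hence $L$, enters, and it makes $c$ independent of $n$, $m$, and $R$, as required. For each $z\in R$ I fix a single derivation tree $T_z$; since $G$ is in Chomsky normal form, every internal node of $T_z$ has exactly two children and every leaf contributes one terminal symbol to the yield.

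The heart of the proof is the choice of an ``interchange node'' $v_z$ in $T_z$ whose yield $x_z$ has length in the half-open interval $(m/2,m]$. I would locate it by descent: start at the root, whose yield has length $n\ge m$, and repeatedly pass to the child with the longer yield, breaking ties arbitrarily. Along this path the yield lengths are non-increasing, and they drop to $1$ at a leaf, so there is a topmost node $v_z$ on the path whose yield has length $\le m$. If $v_z$ is the root then $n=m$ and $\abs{x_z}=m>m/2$; otherwise the parent of $v_z$ has yield length $>m$, and since $v_z$ is the longer of that parent's two children its yield has length at least half the parent's, hence $>m/2$. Writing $z=w_z x_z y_z$ with $x_z$ the yield of $v_z$, I record the \emph{type} of $z$ as the triple $(X_z,\abs{w_z},\abs{x_z})$, where $X_z\in V$ labels $v_z$.

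Now apply pigeonhole. The first coordinate of a type ranges over $V$ and the other two over $\{0,\dots,n\}$, so there are at most $\abs V\,(n+1)^2=c(n+1)^2$ types; hence some single type $(X,\alpha,\lambda)$ is attained by a set $Z=\{z_1,\dots,z_k\}\subseteq R$ with $k\ge\abs R/(c(n+1)^2)$. For any $z_i,z_j\in Z$ the equal values $\abs{w_i}=\alpha=\abs{w_j}$ give (a), the equal values $\abs{x_i}=\lambda=\abs{x_j}$ with $m/2<\lambda\le m$ give (c), and since $\abs{z_i}=\abs{z_j}=n$ we get $\abs{y_i}=n-\alpha-\lambda=\abs{y_j}$, which is (b). For (d): the part of $T_{z_i}$ above $v_{z_i}$ witnesses $S\Rightarrow^{*}w_iXy_i$, while the subtree of $T_{z_j}$ rooted at $v_{z_j}$ --- whose root also carries the label $X$ --- witnesses $X\Rightarrow^{*}x_j$; splicing the second derivation into the first yields $S\Rightarrow^{*}w_iXy_i\Rightarrow^{*}w_ix_jy_i$, so $w_ix_jy_i\in L(G)=L$.

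I expect the descent step to be the only genuinely delicate point: forcing the yield length strictly above $m/2$, rather than merely at most $m$, is exactly what uses the binary-branching structure of Chomsky normal form. The remaining pieces --- the reduction to Chomsky normal form, the count of types, and the derivation-splicing for (d) --- are routine, as are the degenerate cases ($R=\emptyset$, or $L=\emptyset$ in which case one may simply take $c=1$).
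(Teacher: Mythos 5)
The paper does not prove this lemma; it is quoted (up to notation) from Ogden, Ross and Winklmann and used as a black box, so there is no in-paper proof to compare yours against. Your argument is the standard parse-tree/pigeonhole proof from that source and is correct: Chomsky normal form plus descent to the heavier child produces a node with yield length in $(m/2,m]$, the count of $\abs{V}(n+1)^2$ types gives the pigeonhole bound with $c=\abs{V}$, and splicing the subtree of $T_{z_j}$ into the context of $T_{z_i}$ gives (d). The one imprecision is the claim that every internal node of a Chomsky-normal-form parse tree is binary --- nodes realizing productions $A\to a$ have a single terminal child --- but this is harmless here, since any node on your descent path whose yield length exceeds $m\ge 2$ necessarily realizes a binary production, which is all the halving step requires.
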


		\begin{rem}
			When establishing \Cref{77} and \Cref{april21-2020} we needed a word with a certain property.
			We first found an example over a 6-letter alphabet, $(\mt{123})^2\mt{0}(\mt{12345})^2$.
			We were able to reduce the alphabet size from 6 to 5 by using the slightly modified word $(\mt{123})^2\mt{0}(\mt{12341})^2$.
			We expect that the alphabet size, and word length, can be reduced further, but we do not pursue it here.
		\end{rem}
		\begin{lem}\label{77}
			Let
			\(
				w=(\mt{123})^2\mt{0}(\mt{12341})^2\in \{\mt 0,\mt 1,\mt 2,\mt 3,\mt 4\}^{17}.
			\)
			Then $A_N(w)=8$.
		\end{lem}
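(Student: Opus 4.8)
The plan is to establish the two bounds $A_N(w)\le 8$ and $A_N(w)\ge 8$ separately. For the upper bound I would write down an explicit NFA $M$ on $8$ states: take a directed $3$-cycle whose three edges, traversed from a state $q_0$, spell $\mt{123}$, take a directed $5$-cycle whose five edges, traversed from a state $r$, spell $\mt{12341}$, and join the two cycles by a single edge $q_0\xrightarrow{\mt 0}r$; declare $q_0$ initial and $r$ the unique final state. Since $M$ is a partial DFA, its accepting runs on inputs of length $17$ correspond bijectively to the words of length $17$ that it accepts. Any such run consists of $3a$ steps looping at $q_0$, one step along the $\mt 0$-edge, and $5b$ steps looping at $r$, so $3a+1+5b=17$; the only solution in nonnegative integers is $a=b=2$, whence $M$ has exactly one accepting run on inputs of length $17$, and it reads $(\mt{123})^2\mt 0(\mt{12341})^2=w$. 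Thus $M$ witnesses $A_N(w)\le 8$.

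For the lower bound, let $M$ be any NFA witnessing $A_N(w)=q$ and let $\pi=p_0p_1\cdots p_{17}$ be its unique accepting run of length $17$. Deleting every state and every edge not visited by $\pi$ keeps $\pi$ accepting and cannot create new length-$17$ accepting runs, so I may assume every state and every edge of $M$ lies on $\pi$. Since $\mt 0$ occurs in $w$ only in position $7$, it follows that $p_6\xrightarrow{\mt 0}p_7$ is the only $\mt 0$-edge of $M$. The crux is a splitting principle: the sub-automaton $M_1$ induced on $\{p_0,\dots,p_6\}$ with start $p_0$ and final state $p_6$ must have the initial segment $p_0\cdots p_6$ of $\pi$ as its \emph{only} length-$6$ run from $p_0$ to $p_6$, for otherwise a second such run, followed by the $\mt 0$-edge and then the terminal segment $p_7\cdots p_{17}$ of $\pi$, would be a second length-$17$ accepting run of $M$. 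Hence $M_1$ witnesses $A_N(\mt{123123})$, and symmetrically the sub-automaton $M_2$ induced on $\{p_7,\dots,p_{17}\}$ witnesses $A_N(\mt{1234112341})$. Since $\mt{123}$ and $\mt{12341}$ are primitive, $A_N(\mt{123123})=3$ and $A_N(\mt{1234112341})=5$ — the matching automata are the two cycles used above, and nothing smaller works, which is a short check (or the standard fact that $A_N(y^m)=\abs{y}$ for primitive $y$ and $m\ge 2$). Therefore $\abs{\{p_0,\dots,p_6\}}\ge 3$ and $\abs{\{p_7,\dots,p_{17}\}}\ge 5$.

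Hence $q\ge 8$ will follow once I show that these two state sets are disjoint, and this is the main obstacle. If some state were shared, say $p_i=p_j$ with $i\le 6<j$, there would be a closed walk at that state spelling $w_{i+1}\cdots w_j$, necessarily passing through the unique $\mt 0$-edge; I would combine this with the cyclic structure of $M_1$ and $M_2$ — reading $\mt{123}$ permutes the three states of $M_1$ cyclically, reading $\mt{12341}$ permutes the five states of $M_2$ cyclically — to splice $\pi$ at the shared state into a \emph{different} accepting run of length $17$, using decisively that $\gcd(3,5)=1$, so that a rerouting shifting the run by a nonzero multiple of $3$ inside the first square cannot be offset by a multiple of $5$ inside the second within a total of $17$ letters. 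Carrying this out requires casing on the label leaving the shared state and on the residues of $i$ and $j$ modulo $3$ and $5$ respectively; since all of this is a bounded check, a safe alternative — and the route I would actually take — is to enumerate all NFAs on at most $7$ states that accept $w$ and verify directly that none meets the uniqueness requirement of \Cref{precise}. Either way one obtains $A_N(w)\ge 8$, which together with the construction above yields $A_N(w)=8$.
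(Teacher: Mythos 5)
Your proposal matches the paper's proof: the same $8$-state automaton (a $3$-cycle and a $5$-cycle joined by the unique $\mt 0$-edge, with the counting argument that $3x+1+5y=17$ forces $(x,y)=(2,2)$) gives the upper bound, and the lower bound is settled by exhaustive computer search over automata with at most $7$ states, exactly as in the paper. Your sketched hand proof of the lower bound is admittedly incomplete at the disjointness step, but since you explicitly fall back on the brute-force verification, the argument you actually commit to is the paper's.
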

		\begin{proof}
			Let $M$ be the following NFA.
			\[
			\xymatrix{
			&&& & *+[Fo]{q_4} \ar[drr]^{\mt 2} \\
			\text{start}\ar[r]&*+[Fo]{q_0}\ar[r]^{\mt 0}\ar[d]^{\mt 1}&*+[Foo]{q_3}\ar[urr]^{\mt 1}& & & &*+[Fo]{q_5} \ar[ddl]^{\mt 3} \\
			*+[Fo]{q_2}\ar[ur]^{\mt 3} & *+[Fo]{q_1}\ar[l]^{\mt 2}\\
			&&&*+[Fo]{q_7} \ar[uul]^{\mt 1}& & *+[Fo]{q_6}\ar[ll]^{\mt 4} \\
			}
			\]
			Since the equation
			\(
			3x+1+5y=17
			\)
			has the unique solution $(x,y)=(2,2)$ over $\mathbb N$, we see that $M$ accepts $w$ uniquely via the sequence of states
			\[
				(q_0, q_1, q_2, q_0, q_1, q_2, q_0, q_3, q_4, q_5, q_6, q_7, q_3, q_4, q_5, q_6, q_7, q_3).
			\]
			Since $M$ has only 8 states, it follows that $A_N(w)\le 8$.
			Conversely, the inequality $A_N(w)\ge 8$ was verified with a \emph{brute force} computerized search in {Python 3.7} on {July 7, 2021}.
		\end{proof}

		\begin{df}\label{df:power}
			For a word $x$ and positive integers $p,q$, we define the power $x^{\alpha}$, $\alpha=p/q$, to be the prefix of length $n$ of $x^{\omega} = xxx\cdots$,
			where $n = p\abs{x}/q$ is an integer, if it exists. The word $x^{\alpha}$ is called an $\alpha$-power; if $\alpha=2$ this is read as squarefree, if $\alpha=3$ cubefree.
			If a word $w$ has a (consecutive) subword of the form $x^{\alpha}$ then $x$ is said to occur with exponent $\alpha$ in $w$.
			The word $\mathbf{w}$ is $\alpha$-power-free if it contains no subwords which are  $\alpha$-powers.
			Finally, $w$ is \emph{overlap-free} if it is $\alpha$-power-free for all $\alpha>2$.
		\end{df}
		As an example of \Cref{df:power}, we have $(\mt{0110})^{3/2}=\mt{011001}$.

		\begin{lem}\label{april21-2020}
			There exists an $A_N$-simple overlap-free word of odd length.
		\end{lem}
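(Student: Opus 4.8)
The plan is to take the word $w=(\mt{123})^2\mt 0(\mt{12341})^2$ of \Cref{77} and check that it has the required three properties. Two of them are immediate: $\abs w=17$ is odd, and since $\lfloor 17/2\rfloor+1=9$ while $A_N(w)=8$ by \Cref{77}, we have $A_N(w)<\lfloor\abs w/2\rfloor+1$, so $w$ is $A_N$-simple. Everything therefore comes down to verifying that $w$ is overlap-free.

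For that, I would use the familiar reformulation of \Cref{df:power}: a word is overlap-free if and only if it has no factor of the form $u\,u\,c$ with $u$ nonempty and $c$ the first letter of $u$; equivalently, no factor $v$ with $\abs v=2p+1$ admitting $p$ as a period, for any $p\ge 1$. Since $\abs w=17$, only the periods $p\in\{1,\dots,8\}$ can occur, and for each such $p$ there are just $17-2p$ windows of length $2p+1$ to examine, in each of which one checks that $v[i]\ne v[i+p]$ for some $i$ with $1\le i\le p+1$. This is a bounded, purely mechanical verification; it can be carried out by hand or confirmed by the same kind of computerized search already used for the lower bound in \Cref{77}.

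The one real subtlety is that $w$ is \emph{not} squarefree --- it contains the squares $(\mt{123})^2$, $(\mt{12341})^2$, and $\mt{11}$ (the last one arising at the seam of $(\mt{12341})^2$) --- so overlap-freeness does not come for free from squarefreeness, and one genuinely has to check that no factor extends such a square to a full overlap. Here the interior letter $\mt 0$ is essential: without it the period-$3$ pattern of $(\mt{123})^2$ would run on into the following block (which also begins $\mt{123}$), producing the overlap $(\mt{123})^{7/3}=\mt{1231231}$, and inserting $\mt 0$ is exactly what severs this. The case analysis throws up a couple of near misses --- for instance the length-$9$ factor at positions $4$ through $12$, which agrees with a period-$4$ pattern in all residues but one --- but every one of them resolves, and the finite check goes through. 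I expect this last, slightly delicate, finite verification to be the only obstacle worth mentioning.
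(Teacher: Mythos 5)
Your proposal is correct and uses exactly the same witness as the paper: the paper's entire proof is ``Consider the word $w$ of \Cref{77}, of length 17,'' leaving the verification implicit, and you supply that verification accurately (oddness, $A_N(w)=8<9=\lfloor 17/2\rfloor+1$, and the finite period-by-period check of overlap-freeness, including the correct observations that $w$ contains squares such as $\mt{11}$ and that the letter $\mt 0$ blocks the would-be overlap $(\mt{123})^{7/3}$). No gap; you have simply written out the routine check the paper omits.
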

		\begin{proof}
			Consider the word $w$ of \Cref{77}, of length 17.
		\end{proof}

		\begin{lem}\label{jun20-2022}
			Let $\Sigma$ be a finite alphabet.
			Let $\alpha\in\mathbb Q$, $\alpha\ge 1$ and let $x\in\Sigma^*$ such that $x$ is an $\alpha$-power.
			Then $A_N(x)\le\abs{x}/\alpha$.
		\end{lem}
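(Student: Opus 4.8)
The plan is to exhibit a small automaton directly. Since $x$ is an $\alpha$-power, by \Cref{df:power} there is a word $z$ and a representation $\alpha=p/q$ with $p,q$ positive integers such that $x$ is the prefix of length $n:=p\abs{z}/q$ of $z^{\omega}$. Writing $\ell:=\abs{z}$, this gives $n=\alpha\ell$ and hence $\ell=\abs{x}/\alpha$. So it suffices to build an NFA with $\ell$ states that accepts $x$ and along which the number of accepting paths for words of length $n$ is exactly $1$.

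First I would set up a ``cycle'' automaton that spells out $z$. Let $z=z_1z_2\cdots z_\ell$ and take states $Q=\{q_0,q_1,\dots,q_{\ell-1}\}$, initial state $q_0$, and the single transition $\delta(q_i,z_{i+1})=q_{(i+1)\bmod\ell}$ for each $i<\ell$. This is a partial DFA (\Cref{def:NFA}), viewed as an NFA. Reading $x$ from $q_0$ the path is forced: after $j$ letters the automaton is in state $q_{j\bmod\ell}$, so after all $n$ letters it is in $q_r$ where $r:=n\bmod\ell$; note $0\le r<\ell$ since $\lfloor\alpha\rfloor\ell\le n<(\lfloor\alpha\rfloor+1)\ell$. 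Declaring $F=\{q_r\}$, the NFA $M:=(Q,\Sigma,\delta,q_0,F)$ accepts $x$.

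Second I would verify the uniqueness condition of \Cref{precise}. Because each state of $M$ has at most one outgoing transition, every word accepted by $M$ has exactly one accepting path, and moreover the only word of length $n$ that labels any path out of $q_0$ is $x$ itself (the letters are forced to be $z_1,z_2,\dots$ cyclically). Hence the number of paths along which $M$ accepts words of length $n=\abs{x}$ equals $1$. Since $M$ has $\ell$ states, $A_N(x)\le\ell=\abs{x}/\alpha$.

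There is essentially no hard step here; the only points requiring a little care are the bookkeeping with the representation $\alpha=p/q$ (any valid representation works, since $\ell=\abs{x}/\alpha$ regardless), checking that $r=n\bmod\ell\in\{0,\dots,\ell-1\}$ so that $q_r$ is a genuine state, and the degenerate cases $\ell=1$ (a single state with a self-loop, which accepts every power of the one letter but only $x$ among words of length $n$) and $\alpha$ an integer (then $r=0$ and $F=\{q_0\}$), both of which are handled by the same argument.
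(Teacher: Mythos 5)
Your proof is correct and follows essentially the same route as the paper: both construct the cycle automaton on $\abs{x}/\alpha$ states spelling out the base word, with the final state placed at position $\abs{x}\bmod\ell$ around the cycle. Your write-up is in fact a bit more explicit than the paper's about verifying the unique-acceptance condition of \Cref{precise}, but the underlying construction is identical.
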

		\begin{proof}
			Let $x=(x_1\cdots x_v)^{\alpha}$ where $m,v\in\mathbb N$ such that the fractional part of $\alpha$ is $m/v$ and $x_1,\dots,x_v\in\Sigma$.
			Let $M$ be the following NFA, whose digraph is a cycle:
			\[
				\xymatrix{
				\text{start}\ar[d]\\
				*+[Fo]{\phantom{q_0}}\ar[r]^{x_1}	&	*+[Fo]{\phantom{q_0}}\ar[r]^{x_2}	&	\cdots\ar[r]^{x_{m-2}}	&	*+[Fo]{\phantom{q_0}}\ar[r]^{x_{m-1}}	&	*+[Fo]{\phantom{q_0}}\ar[r]^{x_m}	&	*+[Foo]{\phantom{q_0}}\ar[r]^{x_{m+1}}	&	*+[Fo]{\phantom{q_0}}\ar[d]^{x_{m+2}}\\
				*+[Fo]{\phantom{q_0}}\ar[u]^{x_{v}}	&	*+[Fo]{\phantom{q_0}}\ar[l]^{x_{v-1}}	&	*+[Fo]{\phantom{q_0}}\ar[l]^{x_{v-2}}	&	\cdots\ar[l]^{x_{v-3}}			&	*+[Fo]{\phantom{q_0}}\ar[l]^{x_{m+5}}	&	*+[Fo]{\phantom{q_0}}\ar[l]^{x_{m+4}}	&	*+[Fo]{\phantom{q_0}}\ar[l]^{x_{m+3}}
				}
			\]
			The number of states is $v$, so $A_N(x)\le v=\abs{x}/\alpha$.
		\end{proof}
		\begin{lem}\label{feb27-2022-6pm}
			Let $\Sigma$ be a finite alphabet.
			Consider the following implication for a word $x\in\Sigma^*$.
			\begin{equation}\label{it}
				\text{
					$A_N(x)\le \abs{x}/\alpha \implies$ $x$ contains an $\alpha$-power.
				}
			\end{equation}
			\begin{enumerate}[(i)]
				\item Let $\alpha\ge 1$ be an integer. Then~\eqref{it} holds.
				\item \eqref{it} fails for $\alpha=2+\frac18$.
			\end{enumerate}
		\end{lem}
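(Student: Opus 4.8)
My plan is a pigeonhole-plus-exchange argument on the unique accepting computation. Suppose $A_N(x)=q\le n/\alpha$ with $n=\abs x$, and fix an NFA $M=(Q,\Sigma,\delta,q_0,F)$, $\abs Q=q$, witnessing this: $M$ accepts $x$ and has exactly one accepting path among words of length $n$, which is therefore the path spelling $x$, say $q_0=p_0,p_1,\dots,p_n$ with $p_n\in F$ and $p_t\in\delta(p_{t-1},x_t)$ for $1\le t\le n$. Since $n+1\ge\alpha q+1>\alpha q$, some state $s$ occurs at $\alpha+1$ of the positions, say $s=p_{i_0}=p_{i_1}=\dots=p_{i_\alpha}$ with $i_0<i_1<\dots<i_\alpha$. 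Put $u_j=x_{i_{j-1}+1}\cdots x_{i_j}$ for $1\le j\le\alpha$; each $u_j$ is a nonempty word labelling a closed walk at $s$, and $u_1u_2\cdots u_\alpha=x_{i_0+1}\cdots x_{i_\alpha}$ is a consecutive subword of $x$.

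\textbf{The exchange step.}
For each $1\le j<\alpha$, interchanging the $j$-th and $(j+1)$-th loops turns $p_0,\dots,p_n$ into another walk from $q_0$ to $p_n\in F$ of the \emph{same} length $n$, hence another accepting path among words of length $n$; by the uniqueness clause of \Cref{precise} this walk also spells $x$, which forces the word equation $u_{j+1}u_j=u_ju_{j+1}$. Thus consecutive loop labels commute; since on nonempty words two words commute exactly when they are powers of a common word (Lyndon--Sch\"utzenberger), transitivity yields a single primitive word $v$ with each $u_j$ a power of $v$. Then $u_1\cdots u_\alpha=v^m$ with $m\ge\alpha$, so $x$ contains the $\alpha$-power $v^\alpha$. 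I expect the only point needing care to be checking that the exchanged walk really is a legitimate accepting path of length exactly $n$, so that uniqueness applies; the pigeonhole count and the passage from $v^m$ to $v^\alpha$ are routine.

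\textbf{Part (ii): failure for $\alpha=2+\tfrac18$.}
Here I would simply exhibit the word $w=(\mt{123})^2\mt{0}(\mt{12341})^2$ of \Cref{77}: it has length $17$ and $A_N(w)=8$, so $A_N(w)=8=17/(17/8)=\abs w/\alpha$ for $\alpha=2+\tfrac18$, and the hypothesis of~\eqref{it} holds. But $w$ is overlap-free (\Cref{april21-2020}), i.e.\ $\beta$-power-free for every $\beta>2$; since $2+\tfrac18>2$, the word $w$ contains no $(2+\tfrac18)$-power, so the conclusion of~\eqref{it} fails and the implication does not hold for this $\alpha$.
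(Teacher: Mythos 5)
Your proof is correct and follows essentially the same route as the paper: the same pigeonhole count on the $\abs{x}+1$ visits distributed over at most $\abs{x}/\alpha$ states for part (i), and the same length-17 word of \Cref{77} (together with its overlap-freeness) for part (ii). The only difference is that where you carry out the loop-exchange and Lyndon--Sch\"utzenberger argument in full, the paper simply cites Hyde's result (\cite[Theorem 16]{MR3386523}) that a state visited at least $k+1$ times on a uniquely accepted word forces a $k$th power; your inline version of that argument is sound.
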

		\begin{proof}

			\noindent Proof of (i):
			Suppose that $A_N(x)\le\abs{x}/k$ as witnessed by an NFA $M$.
			Then the $\abs{x}+1$ visiting times of $M$ during its computation on input $x$ are distributed among the at most $\abs{x}/k$ states.
			Consequently, some state is visited at least $\frac{\abs{x}+1}{\abs{x}/k} = k(1+\frac1{\abs{x}})>k$, and hence at least $k+1$, times.
			However, \cite[Theorem 16]{MR3386523} says: If an NFA $M$ uniquely accepts $x$ of length $n$, and visits a state $p$ at least $k+1$ times, where $k\ge 2$,
			then $x$ contains a $k$th power.
			
			\noindent Proof of (ii): consider the word $x$ of \Cref{77} of length 17 and let $\alpha=2+\frac18$.
			Note that $\abs{x}/\alpha=17/(2+\frac18)=8$.
		\end{proof}

		Let $\Sigma,\Delta$ be finite alphabets. A \emph{morphism} $\varphi:\Sigma^*\to\Delta^*$
		is a function satisfying $\varphi(xy)=\varphi(x)\varphi(y)$ for all $x,y\in\Sigma^*$.
		Note that in order to define a particular morphism $\varphi$ it suffices to define $\varphi(a)$ for each $a\in\Sigma$.

		A function $f$ is called squarefree-preserving if for each squarefree $x$, $f(x)$ is squarefree.

		\begin{thm}[Brandenburg~\cite{MR693069}]\label{thm:brandenburg}
			The following morphism $h$ is squarefree-preserving:
			\begin{eqnarray*}
				\mt{0}\to	\mt{0102012021012102010212}\\
				\mt{1}\to	\mt{0102012021201210120212}\\
				\mt{2}\to	\mt{0102012101202101210212}\\
				\mt{3}\to	\mt{0102012101202120121012}\\
				\mt{4}\to	\mt{0102012102010210120212}\\
				\mt{5}\to	\mt{0102012102120210120212}
			\end{eqnarray*}
		\end{thm}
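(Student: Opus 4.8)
The plan is to replace the a priori infinitary assertion ``$h(x)$ is squarefree for \emph{every} squarefree $x$'' by a finite computation, using the classical test-set characterization of squarefree morphisms (Bean--Ehrenfeucht--McNulty, Berstel, and in sharpest form Crochemore). That theory tells us that a \emph{uniform} morphism of block length $\ell\ge 3$ is squarefree-preserving as soon as $h(w)$ is squarefree for every squarefree word $w$ with $\abs w\le 3$. Our $h$ is uniform of block length $22$, so it suffices to check the $150$ words $h(\mt a\mt b\mt c)\in\{\mt0,\dots,\mt5\}^{66}$, one for each squarefree triple $\mt a\mt b\mt c$ (those with $\mt a\ne\mt b$ and $\mt b\ne\mt c$). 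The cases of length $1$ and $2$ come for free: $h(\mt a)$ is a prefix of $h(\mt a\mt b)$, which is a prefix of $h(\mt a\mt b\mt c)$, and over a $6$-letter alphabet every squarefree word of length $\le 2$ extends to a squarefree word of length $3$ (for instance $\mt a\mt b$ to $\mt a\mt b\mt a$). Since testing whether a length-$n$ word is squarefree is polynomial in $n$, this is a routine computer verification; in the paper I would cite Brandenburg~\cite{MR693069}, who exhibits exactly this morphism, together with the reduction, and record that the $150$-word check has been performed.

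If one instead wants to prove the reduction from scratch, that is where the real work sits. The engine is a \emph{synchronization} property of the code $C=\{h(\mt0),\dots,h(\mt5)\}$: using that each $h(\mt a)$ is squarefree and that $h$ is squarefree on squarefree inputs of length $2$ and $3$, one shows $C$ has bounded synchronization delay, i.e.\ whenever $h(u)$ occurs inside $h(v)$ for squarefree $v$, the occurrence is aligned with the block decomposition of $h(v)$ up to a boundary discrepancy of less than $\ell$. Granting this, suppose for contradiction that $h(x)$ contains a square $ss$ with $x$ squarefree and $\abs x$ minimal, and localize $ss$ in the factorization $h(x)=h(x_1)\cdots h(x_n)$. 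If $\abs s<\ell$ then $ss$ has length less than $2\ell$, so it lies inside the image of a factor of $x$ of length at most $3$ and is excluded by the finite check. If $\abs s\ge\ell$, synchronization forces the period $\abs s$ to be a multiple of $\ell$ after a bounded shift, which by injectivity of $h$ (itself a consequence of synchronization) pushes the square down to a square in $x$, contradicting squarefreeness. The bookkeeping of these boundary discrepancies is precisely where the hypothesis $\ell\ge 3$ is used, and pinning those estimates down is the main obstacle in a self-contained treatment.

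It is worth noting why the theorem is wanted here at all: downstream we only need the \emph{existence} of a uniform squarefree-preserving morphism on a small alphabet whose block length takes a convenient value, so that composing it with the explicit word of \Cref{77} and with the cyclic-automaton construction of \Cref{jun20-2022} produces $A_N$-simple squarefree words of many lengths. Brandenburg's morphism, with block length $22$, is a convenient concrete choice, and any morphism with comparable properties would serve equally well; so for our purposes citing this theorem as a black box is entirely adequate, and re-deriving the reduction in full would be a substantial digression from the automatic-complexity theme.
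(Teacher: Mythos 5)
The paper offers no proof of this statement at all: it is imported verbatim from Brandenburg's paper as an attributed black box, and all the paper actually needs is the bare existence of some squarefree-preserving morphism from a $6$-letter to a $3$-letter alphabet. Your proposal therefore does strictly more than the paper, and the route you choose is the standard and correct one: since $h$ is uniform with block length $22\ge 3$, Crochemore's sharp test-set theorem reduces squarefreeness of the morphism to squarefreeness of $h(w)$ for the $6\cdot 5\cdot 5=150$ squarefree words $w$ of length $3$ (lengths $1$ and $2$ being absorbed as prefixes, as you note), and that finite check is a routine computation. Two caveats. First, once the reduction is granted the finite verification \emph{is} the proof, so it must either be performed or cited; and your second paragraph, sketching the synchronization argument behind the test-set theorem, correctly identifies the boundary bookkeeping as the hard part but does not carry it out, so as written the reduction itself still rests on a citation to Crochemore. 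Second, your closing paragraph misstates the downstream use: $h$ is not composed with the word of \Cref{77} or the cyclic automaton of \Cref{jun20-2022}; rather, it is used in \Cref{mainish} to pull the non-context-freeness of languages squeezed between $\SQ_6$ and $\REP_6$ (\Cref{what we already proved}) back to languages between $\SQ_3$ and $\REP_3$ via closure of $\CFL$ under inverse morphism (\Cref{414}), squarefreeness-preservation being exactly the inclusion $h^{-1}(\REP_3)\subseteq\REP_6$. Neither caveat affects the correctness of your argument for the theorem itself.
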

		
		\begin{df}
			The perfect shuffle of two words $x$, $y$ of the same length $a$ is $\Sha(x,y)=x_1 y_1 \cdots x_a y_a$.
		\end{df}

		We follow Shallit's presentation~\cite[Theorem 4.5.4]{Shallit:2008:SCF:1434864} in \Cref{what we already proved}, which is a slight strengthening thereof.
		\begin{thm}\label{what we already proved}
			For each language $L$,
			\[
				\SQ_6\subseteq L\subseteq \REP_6\implies L\not\in\CFL.
			\]
		\end{thm}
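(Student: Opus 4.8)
Suppose for contradiction that $L\in\CFL$, and let $c>0$ be the constant supplied for $L$ by the Interchange \Cref{lem:inter}. The strategy is to feed the Interchange Lemma a large supply of \emph{square} words lying in $L$ (they do, since $\SQ_6\subseteq L$), and then to extract from the resulting decompositions one word which the lemma certifies to be in $L$ but which we can show is squarefree, contradicting $L\subseteq\REP_6$.

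To keep squarefreeness easy to track, partition $[6]$ into $A=\{\mt 0,\mt 1,\mt 2\}$ and $B=\{\mt 3,\mt 4,\mt 5\}$ and build the square roots as perfect shuffles. For a large integer $a$, let $u$ range over squarefree words of length $a$ over $A$ and $v$ over squarefree words of length $a$ over $B$, and set $s=\Sha(u,v)\in[6]^{2a}$. Since $A$ and $B$ are disjoint, $s$ is squarefree: an even-length square in $s$ would project, along the odd (resp.\ even) positions, to a square in $u$ (resp.\ $v$), while an odd-length square in $s$ would equate a nonempty factor over $A$ with one over $B$. Moreover $s\,s=\Sha(uu,vv)\in\SQ_6\subseteq L$. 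With $n=4a$, let $R\subseteq L\cap[6]^{n}$ be the set of all these squares $s\,s$; as $(u,v)\mapsto s\,s$ is injective, $|R|$ grows exponentially in $n$. (If one wishes the supply of squarefree roots to be maximal, one may first pass $u$ and $v$ through Brandenburg's squarefree-preserving morphism from \Cref{thm:brandenburg}, but this is not essential.) Now apply the Interchange Lemma to $R$, with the window length $m$ chosen as a fixed fraction of $n$ and tuned so that the common cut positions sit in a controlled place relative to the midpoint $2a$ of the square. This produces $Z=\{z_1,\dots,z_k\}$ with $k\ge |R|/(c(n+1)^2)$ and decompositions $z_i=w_ix_iy_i$ where $|w_i|=p$, $|y_i|=q$ and $m/2<|x_i|=\ell\le m$ are independent of $i$, and $w_ix_jy_i\in L$ for all $i,j$.

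It remains to locate $i\ne j$ with $w_ix_jy_i$ squarefree. The word $w_ix_jy_i$ is just $z_i$ with the block at positions $[p+1,p+\ell]$ overwritten by the corresponding block of $z_j$, so it retains the alternating shape ``odd position $\Rightarrow$ letter in $A$, even position $\Rightarrow$ letter in $B$''; by the projection argument above, $w_ix_jy_i$ is squarefree over $[6]$ exactly when its odd-indexed subword is squarefree over $A$ and its even-indexed subword is squarefree over $B$. Each such subword is again of the original type: a square of a ternary squarefree word ($u_iu_i$, resp.\ $v_iv_i$) with a central block overwritten from $u_ju_j$, resp.\ $v_jv_j$. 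We are thereby reduced to a ternary middle-block interchange problem: among the $k$ pairs $(u_i,v_i)$ indexed by $Z$ — exponentially many in $n$ — find $i\ne j$ with $u_i\ne u_j$ and $v_i\ne v_j$ for which overwriting the central block of $u_iu_i$ by that of $u_ju_j$ destroys the square $u_iu_i$ without creating a new square anywhere, and likewise for the $v$'s. This is precisely the analysis of Ross--Winklmann in the form presented by Shallit~\cite[Theorem 4.5.4]{Shallit:2008:SCF:1434864}: for each index one records the short prefix of $x_i$ together with the short suffix of $w_i$ (and symmetrically on the right), shows by counting over $Z$ that a compatible pair must occur, and verifies that no square can straddle either seam of the resulting word. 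Given such $i,j$, we have $w_ix_jy_i\in L$ by the Interchange Lemma while $w_ix_jy_i\notin\REP_6$, which is the desired contradiction. The main obstacle is exactly this last step — producing a genuine square-destroying, non-square-creating interchange — and it is here that the careful choice of $n$, $m$ and of the shuffled family $R$ does the work; everything else is bookkeeping.
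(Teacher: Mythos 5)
There is a genuine gap, and it lies exactly where you say the ``main obstacle'' is. Your family $R$ lets \emph{both} shuffled components $u_i$ and $v_i$ vary over squarefree words, so the interchanged word $w_ix_jy_i$ splices a central block of $u_ju_j$ into $u_iu_i$ with $u_i\ne u_j$. You then assert that a pigeonhole on a ``short prefix of $x_i$ together with the short suffix of $w_i$'' rules out squares straddling the seams. That does not work: a square straddling the junction of two squarefree ternary words can be arbitrarily long relative to the words, so matching a bounded-length window on each side of the seam does not prevent new squares from being created, and no subexponential classification of seam types is available to feed the counting argument. Moreover, this step is \emph{not} ``precisely the analysis of Ross--Winklmann in the form presented by Shallit'': that analysis never splices two different squarefree words together, precisely to avoid this problem.

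The paper's construction sidesteps the seam issue entirely by putting all of the entropy into free letters and none into the squarefree carrier. It fixes a \emph{single} squarefree word $r'$ over $\{\mt 0,\mt 1,\mt 2\}$, prepends a marker to get $r=\mt 3\,r'$, and takes $R=\{\Sha(rr,ss):s\in\{\mt 4,\mt 5\}^{n/4}\}$. Every element of the ambient family $A_n=\{\Sha(rr,s):s\in\{\mt 4,\mt 5\}^{n/2}\}$ has the \emph{same} letters in the carrier positions, so an interchange of middle blocks only ever alters $\mt 4$'s and $\mt 5$'s; combined with the marker $\mt 3$, this yields the key structural fact that a word of $A_n$ contains a square if and only if it \emph{is} a square. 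Hence if some $x_g\ne x_h$, the word $w_gx_hy_g$ fails to be a square (the changed free letter has an unchanged mirror image in the other half, since $\abs{x}\le n/2$) and is therefore squarefree, contradicting $L\subseteq\REP_6$ with no seam analysis needed; and if all $x_i$ coincide, a count of the remaining free positions bounds $\abs{Z}$ by $2^{n/8}$, contradicting the interchange lemma's lower bound. Your proposal is missing this fixed-carrier idea, and without it the reduction to a ``ternary middle-block interchange problem'' is an unsolved (and, as stated, unsolvable-by-bounded-pigeonhole) subproblem rather than a citation to known work.
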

		\begin{proof}
			Assume that $L\in\CFL$. Let $c$ be the constant in the Interchange \Cref{lem:inter} and choose $n$ divisible by 8 and sufficiently large so that
			\[
			\frac{2^{n/4}}{c(n+1)^2} > 2^{n/8}.
			\]
			Let $r'$ be a squarefree word over the alphabet $\{\mt 0,\mt 1,\mt 2\}$ of length $\frac{n}4-1$ and let $r=\mt 3\,r'$,
			so that $\abs{r}=\frac{n}4$. Let 
			\[
				A_n = \{\Sha(rr, s): s\in\{\mt 4,\mt 5\}^{n/2}\}\subseteq\{\mt 0,\dots,\mt 5\}^n.
			\]
			As shown in~\cite{Shallit:2008:SCF:1434864},
			\begin{enumerate}
				\item if $z_i=w_i x_i y_i\in A_n$ for $i\in\{1,2\}$ with $\abs{w_1}=\abs{w_2}$, $\abs{x_1}=\abs{x_2}$, and $\abs{y_1}=\abs{y_2}$, then
					$w_1 x_2 y_1\in A_n$ and $w_2 x_1 y_2\in A_n$, too; and
					\item if $z\in A_n$, then $z$ contains a square if and only if $z$ is a square.
			\end{enumerate}
			Now let $B_n=A_n\cap SQ_6=A_n\cap \REP_6=\{\Sha(rr,ss):s\in\{\mt 4,\mt 5\}^{n/4}\}$ and note that $\abs{B_n}=2^{n/4}$.
			By the Interchange \Cref{lem:inter} with $m=n/2$ and $R=B_n$,
			there is a subset $Z\subseteq B_n$, $Z=\{z_1,\dots,z_k\}$ with $z_i=w_i x_i y_i$ satisfying the conclusions of that lemma.
			In particular,
			\[
				k=\abs{Z}\ge \frac{\abs{R}}{c(n+1)^2} = \frac{2^{n/4}}{c(n+1)^2}>2^{n/8}.
			\]

			Case 1: There exist indices $g$, $h$ such that $x_g\ne x_h$.
			By the Interchange \Cref{lem:inter}, $w_g x_h y_g\in L$. Moreover, one of the $\mt 4$ or $\mt 5$'s was changed going from $x_g$ to $x_h$.
			Since $\abs{x}\le m=n/2$, the corresponding $\mt 4$ or $\mt 5$ in the other half was not changed.
			So $w_g x_h y_g$ is not a square and hence does not contain a square. So $w_g x_h y_g\not\in L$, contradiction.

			Case 2: Case 1 fails.
			Then all the $x_i$ are the same and have length at least $m/2=n/4$.
			Therefore there are at least $n/4$ positions in which all the $z_i$ are the same.
			Because of the shuffle and since $n/4$ is even, at least half of these, i.e., at least $n/8$ positions, contain $\mt 4$'s and $\mt 5$'s,
			which means that these $n/8$ positions have constant values within $Z$.
			This leaves at most at most $n/8$ positions where both choices $\mt 4,\mt 5$ are available within $Z$.
			Thus $\abs{Z}\le 2^{n/8}$, which is a contradiction.
		\end{proof}

		\begin{thm}[{\cite[Theorem 4.1.4]{Shallit:2008:SCF:1434864}}]\label{414}
			Let $h$ be a morphism and $L\in\CFL$. Then $h^{-1}(L)=\{x\mid h(x)\in L\}\in\CFL$.
		\end{thm}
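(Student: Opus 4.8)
The plan is to prove closure of $\CFL$ under inverse morphism by a direct pushdown automaton construction. Fix a pushdown automaton $M=(Q,\Delta,\Gamma,\delta,q_0,Z_0,F)$ with $L(M)=L$, accepting by final state, and let $h\colon\Sigma^*\to\Delta^*$ be the given morphism. Since $\Sigma$ is finite, the set $S$ of all suffixes of the words $h(a)$ for $a\in\Sigma$ is finite; I build a new PDA $M'$ whose state set is $Q\times S$, where $(q,\varepsilon)$ is read as ``$M$ is in state $q$ with nothing buffered.'' The idea is that on reading an input letter $a$, the automaton $M'$ loads the finite word $h(a)$ into its finite control, i.e., passes from $(q,\varepsilon)$ to $(q,h(a))$ without touching the stack, and then drains the buffer one symbol at a time by $\varepsilon$-moves, each such move simulating one step of $M$.

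Concretely, $M'$ has transitions of three kinds. First, for every $q\in Q$, $a\in\Sigma$ and $X\in\Gamma$, a move on input $a$ from $(q,\varepsilon)$ to $(q,h(a))$ replacing $X$ by $X$. Second, for every $q\in Q$, every nonempty buffer $bu\in S$ with $b\in\Delta$, and every $X\in\Gamma$, an $\varepsilon$-move from $(q,bu)$ to $(q',u)$ replacing $X$ by $\gamma$ whenever $(q',\gamma)\in\delta(q,b,X)$. Third, the $\varepsilon$-moves of $M$ are copied verbatim over every buffer: from $(q,u)$ to $(q',u)$ replacing $X$ by $\gamma$ whenever $(q',\gamma)\in\delta(q,\varepsilon,X)$, for each $u\in S$. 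The start state is $(q_0,\varepsilon)$, the start symbol is $Z_0$, and the final states are $\{(q,\varepsilon):q\in F\}$. (If $M$ is specified to accept by empty stack, let $M'$ accept by empty stack as well; if $h(a)=\varepsilon$ for some letter, the first rule simply returns to $(q,\varepsilon)$, which is harmless.)

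The correctness argument is a routine bisimulation. I would show, by induction on $\abs{x}$, that $M'$ has a computation on input $x$ from $(q_0,\varepsilon,Z_0)$ to $(q,\varepsilon,\gamma)$ if and only if $M$ has a computation on input $h(x)$ from $(q_0,Z_0)$ to $(q,\gamma)$; the point is that the moves $M'$ makes between consuming the $i$-th and the $(i{+}1)$-st letters of $x$ are exactly the moves $M$ makes while reading $h(x_i)$, with the $\varepsilon$-moves of $M$ interleaved as they occur (this is why the third rule ranges over all $u\in S$, not only $u=\varepsilon$). Ranging over all $x\in\Sigma^*$ and imposing acceptance gives $L(M')=\{x:h(x)\in L\}=h^{-1}(L)$, which is therefore context-free.

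The one point requiring care — and the one I would flag as the main obstacle to a fully formal argument — is that a PDA may push or pop only a bounded amount per step, so the (fixed but possibly long) word $h(a)$ cannot be simulated in a single transition; this is precisely what forces the buffered states $Q\times S$ and the draining $\varepsilon$-moves, and what makes finiteness of $\Sigma$ essential to keeping $M'$ finite. A secondary nuisance is the bookkeeping for $\varepsilon$-productions of $M$ and for letters $a$ with $h(a)=\varepsilon$, but neither affects the language equality, only the step count of the simulation.
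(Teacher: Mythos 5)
The paper does not prove this statement at all: it is quoted verbatim as Theorem~4.1.4 of Shallit's book and used as a black box. Your construction --- buffering $h(a)$ in the finite control over the finite set of suffixes $S$, draining it by $\varepsilon$-moves that simulate $M$ one $\Delta$-symbol at a time, and interleaving $M$'s own $\varepsilon$-moves over every buffer value --- is the standard textbook proof of closure under inverse morphism (essentially the one in Hopcroft--Ullman and in the cited source), and it is correct, including the correct handling of the two genuinely delicate points: letters with $h(a)=\varepsilon$ and the need for the third transition family to range over all of $S$.
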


		\begin{thm}\label{mainish}
			For each language $L$,
			\[
				\SQ_3\subseteq L\subseteq \REP_3\implies L\not\in\CFL.
			\]
		\end{thm}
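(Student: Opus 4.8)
The plan is to deduce \Cref{mainish} from \Cref{what we already proved} by pulling $L$ back along Brandenburg's squarefree-preserving morphism $h$ of \Cref{thm:brandenburg}, using the closure of $\CFL$ under inverse morphism provided by \Cref{414}.

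Concretely, I would argue by contradiction. Suppose $L\in\CFL$ with $\SQ_3\subseteq L\subseteq\REP_3$. Let $h\colon\{\mt 0,\dots,\mt 5\}^*\to\{\mt 0,\mt 1,\mt 2\}^*$ be the morphism of \Cref{thm:brandenburg}; note from its explicit definition that each $h(a)$ is a nonempty word over the three-letter alphabet. By \Cref{414}, $h^{-1}(L)\in\CFL$. I then claim
\[
	\SQ_6\subseteq h^{-1}(L)\subseteq\REP_6 ,
\]
which by \Cref{what we already proved} is impossible; this contradiction finishes the proof.

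It remains to verify the two inclusions. For $\SQ_6\subseteq h^{-1}(L)$: if $x=uu$ is a square over $\{\mt 0,\dots,\mt 5\}$, then $h(x)=h(u)h(u)$ is a square over $\{\mt 0,\mt 1,\mt 2\}$, so $h(x)\in\SQ_3\subseteq L$ and hence $x\in h^{-1}(L)$. For $h^{-1}(L)\subseteq\REP_6$: if $x\in h^{-1}(L)$, then $h(x)\in L\subseteq\REP_3$, so $h(x)$ is not squarefree; were $x$ squarefree, \Cref{thm:brandenburg} would force $h(x)$ squarefree as well, a contradiction, so $x\in\REP_6$.

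I do not expect a genuine obstacle here — the argument is essentially bookkeeping. The one point to keep in view is that $h$ must be nonerasing and must land in the ternary alphabet (both visible from the explicit definition in \Cref{thm:brandenburg}), since these are exactly what make ``a square maps to a square'' and ``squarefree pulls back to squarefree'' go through. The trivial case $x=\varepsilon$ can be ignored, as \Cref{what we already proved} only concerns long words.
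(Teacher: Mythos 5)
Your proposal is correct and follows essentially the same route as the paper: pull $L$ back along Brandenburg's squarefree-preserving morphism $h$, verify $\SQ_6\subseteq h^{-1}(L)\subseteq\REP_6$ exactly as you do, and combine \Cref{what we already proved} with the closure of $\CFL$ under inverse morphism (\Cref{414}). The only difference is presentational (you frame it as a contradiction, the paper argues contrapositively), and your added remark that $h$ is nonerasing is a harmless extra precaution.
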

		\begin{proof}
			Assume that $\SQ_3\subseteq L\subseteq \REP_3$.
			Let $h$ be the squarefree-preserving morphism of \Cref{thm:brandenburg}.
			Since $h$ is a morphism, for all $x,w$ we have
			\[
				w=xx\implies h(w)=h(x)h(x)\in SQ_3.
			\]
			Thus $\SQ_6\subseteq h^{-1}(SQ_3)$.
			Also, the statement that $h$ is squarefree-preserving is equivalent to: $h^{-1}(\REP_3)\subseteq \REP_6$.
			Thus
			\[
				\SQ_6\subseteq h^{-1}(SQ_3) \subseteq h^{-1}(L)\subseteq h^{-1}(\REP_3) \subseteq \REP_6.
			\]
			By \Cref{what we already proved}, $h^{-1}(L)\not\in\CFL$. Hence by \Cref{414}, $L\not\in\CFL$.
		\end{proof}

		\begin{cor}
			$\{x\in\{\mt 0,\mt 1,\mt 2\}^*: A_N(x)\le\abs{x}/2\}\not\in\CFL$.
		\end{cor}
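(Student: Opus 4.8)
The plan is to apply \Cref{mainish} to the language $L=\{x\in\{\mt 0,\mt 1,\mt 2\}^*: A_N(x)\le\abs{x}/2\}$: once the two-sided inclusion $\SQ_3\subseteq L\subseteq\REP_3$ is established, \Cref{mainish} immediately gives $L\notin\CFL$. So the entire task reduces to verifying those two inclusions, and both are essentially already packaged in the lemmas above.

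For $\SQ_3\subseteq L$, first I would observe that a square word $x=yy$ over $[3]$ is a $2$-power in the sense of \Cref{df:power}, so \Cref{jun20-2022} applied with $\alpha=2$ yields $A_N(x)\le\abs{x}/2$, i.e.\ $x\in L$. (I read $\SQ_3$ as the set of squares of \emph{nonempty} words, which is the convention making $\REP_3$ the exact complement of the squarefree words inside $[3]^*$; the empty word is harmless in any case, since $A_N(\varepsilon)=1>0=\abs{\varepsilon}/2$ already keeps it out of $L$.)

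For $L\subseteq\REP_3$, take $x\in L$, so $A_N(x)\le\abs{x}/2$. Since $2$ is an integer $\ge 1$, the implication~\eqref{it} holds by \Cref{feb27-2022-6pm}(i), and hence $x$ contains a $2$-power, that is, a nonempty square. Thus $x$ is not squarefree, and being a word over $\{\mt 0,\mt 1,\mt 2\}$ it belongs to $\REP_3$.

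I do not expect any real obstacle: the mathematical content lives in \Cref{jun20-2022}, in \Cref{feb27-2022-6pm}(i), and---behind \Cref{mainish}---in \Cref{what we already proved} together with Brandenburg's squarefree-preserving morphism. The one subtlety worth flagging is that the proof relies on the \emph{integer} hypothesis of \Cref{feb27-2022-6pm}: part (ii) of that lemma shows that~\eqref{it} can fail for $\alpha=2+\tfrac18$, so this exact packaging would not analyze $A_N$ at thresholds just below the Hyde bound. At the threshold $\alpha=2$, however, everything lines up and the corollary follows.
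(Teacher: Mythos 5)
Your proposal is correct and follows essentially the same route as the paper: it establishes $\SQ_3\subseteq L$ via \Cref{jun20-2022} with $\alpha=2$, establishes $L\subseteq\REP_3$ via \Cref{feb27-2022-6pm}(i) with $\alpha=2$, and then invokes \Cref{mainish}. The paper's proof is just a terser statement of exactly these three steps, so your version merely adds welcome detail (including the harmless observation about the empty word).
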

		\begin{proof}
			Let $L=\{x\in\{\mt 0,\mt 1,\mt 2\}^*: A_N(x)\le\abs{x}/2\}$.
			By \Cref{jun20-2022} and \Cref{feb27-2022-6pm} with $\alpha=2$,
			\(
				\SQ_3\subseteq L\subseteq \REP_3.
			\)
			By \Cref{mainish}, $L\not\in\CFL$.
		\end{proof}

	\section{Sensitivity of automatic complexity}
		A natural decision problem associated to automatic complexity can be defined as follows.
		\begin{df}\label{def:cb}
			Let $k,c\ge 1$ be integers, and let $A_N$ denote nondeterministic automatic complexity.
			We define the language $L_{k,c}$ by
			\[
				L_{k,c}=\{x\in [k]^* : A_N(x)> \abs{x}/c\}.
			\]
		\end{df}
		The intent is that the decision problem $L_{2,3}$ captures the ``random or not'' characteristic of a word $x$.
		Shallit and Wang~\cite{MR1897300} asked whether automatic complexity can be computed in polynomial time.
		This remains open, in particular so does the question whether $L_{2,3}\in \P$.
		It is easy to see that $L_{2,3}\in\coNP$~\cite{MR3386523}.
		Our state of knowledge is shown in \Cref{state-of-knowledge}.

		\begin{figure}
			\[
				\xymatrix@C=0.5em{
				&& \coNP\cap\E &\\
				&& \P\ar[u] &\\
				&L_{2,3}\ar@{-->}[uur]\ar@{..>}[dl]\ar@{..>}[d]& \SAC^1\ar[u]&L_{3,2}\ar@{-->}[uul]\ar@{..>}[dr]\ar@{..>}[d]\\ 
				\SAC^0\ar[urr]&\coSAC^0\ar[ur]&&\CFL\ar[ul] & \coCFL\ar[ull]\\
				\NC^0\ar[u]\ar[ur]&&
				&& \DCFL\ar[u]\ar[ul]
				}
			\]
			\caption{
			Known implications ($\subseteq$) denoted by solid arrow, containments ($\in$) by dashed arrow, non-containments ($\not\in$) by dotted arrow.
			The non-containments are demonstrated in this paper for the first time except for the one from $L_{3,2}$ to $\CFL$.
			}\label{state-of-knowledge}
		\end{figure}
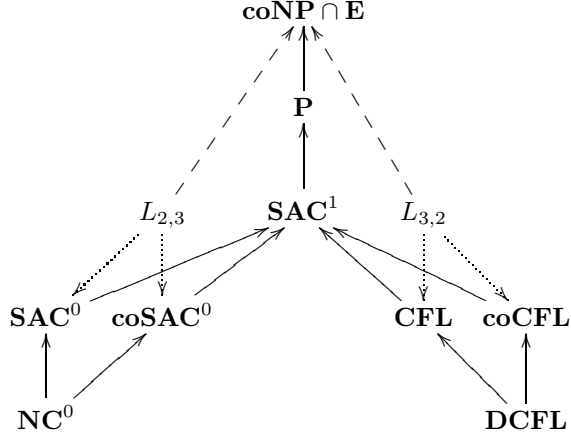
		In order to prove the remaining main result of this paper, \Cref{flipTurn}, we shall use \Cref{thm:july31} on the abundance of high-complexity words, and a new result
		\Cref{feb17-pm-2022}. The latter states that for any $x\in\{0,1\}^n$, and $a_1<\cdots<a_c<n$, there exists $y\in\{0,1\}^n$ which agrees with $x$ on the $a_i$,
		and has $A_N(y)=O(c^2\log n)$ where the constant $O$ is independent of $c$ and $n$.
		Before starting the proof, the reader is invited to consider \Cref{tab:jun20-2022}.
		\begin{table}
			\centering
			\begin{tabular}{c c c c c c c c}
				\toprule
				$c\setminus n$   &0&1&2&3&4&5&6\\
				\midrule
				0				 &1&1&1&1&1&1&1\\
				1				 &-&1&1&1&1&1&1\\
				2				 &-&-&2&2&2&3&3\\
				3				 &-&-&-&2&3&3&4\\
				4				 &-&-&-&-&3&3&4\\
				5				 &-&-&-&-&-&3&4\\
				6				 &-&-&-&-&-&-&4\\
			\bottomrule
			\end{tabular}
		\caption{The best bound on $A_N(y)$ as a function of $c$ and $n$ for some small values of $c$ and $n$.}\label{tab:jun20-2022}
		\end{table}

		We now prove some lemmas that will culminate in \Cref{feb17-pm-2022}.
		\Cref{lem:MA} provides a small but valuable refinement of what we would otherwise get more crudely in \Cref{feb16-2022}.
		\begin{lem}[Max Alekseyev~\cite{416457}]\label{lem:MA}
			Let $c\ge 2$ be an integer and let $a_1<a_2<\cdots<a_c$ be real numbers.
			The average of the values $a_j-a_i$, $i<j$, is at most $\frac{c}{2(c-1)}\ell$, where $\ell=a_c-a_1$.
		\end{lem}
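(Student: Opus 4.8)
The plan is to pass from the average to the total sum, evaluate that sum exactly as a weighted combination of the $a_k$, and then apply summation by parts to exploit the monotonicity $a_1<a_2<\cdots<a_c$. Write $\Sigma=\sum_{1\le i<j\le c}(a_j-a_i)$. Since there are $\binom{c}{2}=\frac{c(c-1)}{2}$ pairs $i<j$, the quantity to be bounded is $\Sigma\big/\binom{c}{2}$, so it suffices to prove $\Sigma\le\frac{c^2}{4}\,\ell$.

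First I would rewrite $\Sigma$ by counting, for each fixed $k$, how often $a_k$ occurs with a plus sign (as the larger index $j$, i.e.\ $k-1$ times) and with a minus sign (as the smaller index $i$, i.e.\ $c-k$ times):
\[
	\Sigma=\sum_{k=1}^{c}\bigl((k-1)-(c-k)\bigr)a_k=\sum_{k=1}^{c}(2k-1-c)\,a_k,
\]
and note that the coefficients $2k-1-c$ sum to $0$. Next, set $d_m=a_m-a_{m-1}\ge 0$ for $2\le m\le c$, so that $a_k=a_1+\sum_{m=2}^{k}d_m$ and $\sum_{m=2}^{c}d_m=a_c-a_1=\ell$. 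Substituting, the $a_1$ term drops out (the coefficients sum to $0$), and swapping the order of summation gives
\[
	\Sigma=\sum_{m=2}^{c}\Bigl(\sum_{k=m}^{c}(2k-1-c)\Bigr)d_m=\sum_{m=2}^{c}(m-1)(c-m+1)\,d_m,
\]
where the inner sum is evaluated via $\sum_{k=1}^{c}(2k-1)=c^2$. For each $m$ the factors $m-1$ and $c-m+1$ are nonnegative and sum to $c$, so AM--GM yields $(m-1)(c-m+1)\le c^2/4$. Hence $\Sigma\le\frac{c^2}{4}\sum_{m=2}^{c}d_m=\frac{c^2}{4}\,\ell$, and therefore
\[
	\frac{\Sigma}{\binom{c}{2}}\le\frac{c^2\ell/4}{c(c-1)/2}=\frac{c}{2(c-1)}\,\ell,
\]
as claimed.

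The argument is entirely elementary; the only step that needs care is the index bookkeeping in the summation by parts, namely verifying $\sum_{k=m}^{c}(2k-1-c)=(m-1)(c-m+1)$, after which the AM--GM bound on the coefficients closes the proof. I expect no genuine obstacle. As a sanity check one sees the bound is tight up to the floor: taking $a_k=a_1$ for $k\le\lfloor c/2\rfloor$ and $a_k=a_c$ otherwise makes $\Sigma=\lfloor c^2/4\rfloor\,\ell$, so the constant $\frac{c}{2(c-1)}$ is essentially optimal (for odd $c$ one actually obtains the slightly stronger $\frac{c+1}{2c}$).
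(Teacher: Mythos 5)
Your proof is correct. It shares its first step with the paper's argument---both begin by computing the net coefficient of each $a_k$ in $\sum_{i<j}(a_j-a_i)$ (your $2k-1-c$ is the paper's $c-1-2i$ after reindexing)---but the two proofs then diverge in how the monotonicity hypothesis is used. The paper regroups the weighted sum into symmetric pairs $(c-1-2i)(a_{c-i}-a_{i+1})$ and bounds each extreme difference $a_{c-i}-a_{i+1}$ by $\ell$, which forces a case split on the parity of $c$ to evaluate $\sum_{i}(c-1-2i)\le (c/2)^2$. You instead perform summation by parts against the consecutive gaps $d_m=a_m-a_{m-1}$, obtaining $\Sigma=\sum_{m=2}^{c}(m-1)(c-m+1)\,d_m$ and bounding each coefficient by $c^2/4$ via AM--GM while $\sum_m d_m=\ell$ exactly. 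Your route is arguably cleaner: it handles both parities uniformly, and it makes the near-extremal configurations transparent (the coefficient $(m-1)(c-m+1)$ is maximized at the middle gap, which is exactly where your tightness example concentrates all of $\ell$); the paper's route is slightly more direct in that it needs no change of variables. One cosmetic caveat: your extremal example has repeated values and so violates the strict inequalities $a_1<\cdots<a_c$, but as you note it is a limit of admissible configurations, so the conclusion about near-optimality of the constant stands.
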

		\begin{proof}
			The average of the values $a_j-a_i$, $i<j$ is
			\begin{eqnarray}
			\binom{c}2^{-1}\sum_{i<j}\abs{a_j-a_i}&=&\frac{2}{c^2-c}\sum_{i<j}a_j-a_i\nonumber \\
			&=&\frac{2}{c(c-1)}\sum_{i=0}^{(c-1)/2}(c-1-2i)(a_{c-i}-a_{i+1})\\
			&\le& \frac{2\ell}{c(c-1)}\sum_{i=0}^{(c-1)/2}(c-1-2i)\label{MA1} \\
			&\le& \frac{c}{2(c-1)}\ell \label{MA2}.
			\end{eqnarray}
			For~\eqref{MA1},
			the ``large'' values $a_c,a_{c-1},\dots,a_{c-i},\dots,a_{c-(c-1)/2}$ are the larger of the two numbers in a difference $c-1-i$ times
			and the smaller of the two $i$ times. The ``small'' values $a_{i+1}$ are the smaller $c-1-i$ times and the larger $i$ times.

			For~\eqref{MA2}, if $c\in\{2d,2d+1\}$, the sum is
			\begin{eqnarray*}
				\sum_{i=0}^{d-1} (c - (2i+1)) = cd - d^2 &=& \begin{cases} d^2,& c=2d\\ d^2+d,& c=2d+1\end{cases}
			\\
				\le (c/2)^2 &=& \begin{cases} d^2,& c=2d\\ (d+\frac12)^2,& c=2d+1.\end{cases}
			\end{eqnarray*}
		\end{proof}
		\begin{df}
			Let $p_n$ denote the $n$th prime number: $p_1=2, p_2=3$, etc.
			The \emph{primorial function} $x\#$ is defined by $x\# = \prod_{i=1}^q p_q$, where $q$ is maximal such that $p_q\le x$.
		\end{df}
		\begin{lem}\label{feb16-2022}
			Let $0\le a_1<\dots < a_c<n$, all integers.
			Let $q\in\mathbb N$ be such that $n\le \frac{2(c-1)}c (p_q\#)^{1/\binom{c}2}$.
			There exists a modulus $m$ and an integer $j\le q$ such that $m=p_j\le p_q$ and all the $a_i$ are distinct mod $m$.
		\end{lem}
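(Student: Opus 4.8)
The plan is to prove the statement by contradiction, reducing it to a tension between divisibility and size. The starting observation is that $a_1,\dots,a_c$ are pairwise distinct modulo a prime $m$ precisely when $m$ divides none of the $\binom{c}{2}$ differences $d_{ij}=a_j-a_i$ with $i<j$; these differences are strictly positive because the $a_i$ are strictly increasing. So I would assume, for contradiction, that for every $j\le q$ the prime $m=p_j$ fails, i.e. each $p_j$ divides at least one of the $d_{ij}$.

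Next I would form the product $P=\prod_{i<j} d_{ij}$, a positive integer. Since $p_1,\dots,p_q$ are distinct primes and each divides some factor of $P$, their product $p_q\#$ divides $P$, hence $P\ge p_q\#$. Simultaneously, the AM--GM inequality bounds $P$ above by the $\binom{c}{2}$-th power of the average of the $d_{ij}$, and \Cref{lem:MA} says that average is at most $\tfrac{c}{2(c-1)}\ell$, where $\ell=a_c-a_1\le n-1$. Chaining the two estimates gives $(p_q\#)^{1/\binom{c}{2}}\le \tfrac{c}{2(c-1)}\ell<\tfrac{c}{2(c-1)}n$, i.e. $n>\tfrac{2(c-1)}{c}(p_q\#)^{1/\binom{c}{2}}$, contradicting the hypothesis. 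Therefore some $p_j$ with $j\le q$ makes all the $a_i$ distinct mod $p_j$, and we set $m=p_j$.

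There is no deep obstacle: morally the argument is just ``too many residue collisions force a large product of differences, and a large product of differences forces a large spread $\ell$.'' The one place that needs care is the inequality bookkeeping at the end — I need the spread bound to be strict ($\ell\le n-1<n$) so the final line genuinely contradicts $n\le\tfrac{2(c-1)}{c}(p_q\#)^{1/\binom{c}{2}}$ rather than merely matching it — together with disposing of the degenerate cases first: if $c\le 1$ there is nothing to prove (any modulus works, e.g. $m=p_1$), and the hypothesis already forces $q\ge 1$, which is what makes $p_q\#$ and the appeal to \Cref{lem:MA} legitimate.
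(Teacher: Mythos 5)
Your proposal is correct and follows essentially the same route as the paper: assume every prime $p_j\le p_q$ collides some pair, deduce that $p_q\#$ divides the product of the positive differences, bound that product above via AM--GM and \Cref{lem:MA} by $\left(\frac{c}{2(c-1)}\ell\right)^{\binom{c}2}$ with $\ell=a_c-a_1<n$, and derive the contradiction $n>\frac{2(c-1)}{c}(p_q\#)^{1/\binom{c}2}$. Your added care about strictness ($\ell\le n-1$) and the degenerate case $c\le 1$ only tightens the bookkeeping the paper leaves implicit.
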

		\begin{proof}
			Suppose that for each $1\le m=p_j\le p_q$, there exist $i\ne j$ with $a_i\equiv a_j$ (mod $m$).
			Then $\prod_{i=1}^q p_j$ divides $\prod_{i< j}(a_j-a_i)$. Since $i\ne j\implies a_i\ne a_j$ this gives,
			using \Cref{lem:MA} and the AM-GM inequality,
			\begin{eqnarray*}
				p_q\#=\prod_{i=1}^q p_j &\le& \prod_{1\le i<j\le c} a_j-a_i\\
				&\le& \left(\text{average}(a_j-a_i)\right)^{\binom{c}2}\\
				&\le&\left(\frac{c}{2(c-1)}\ell\right)^{\binom{c}2}, 
			\end{eqnarray*}
			where $\ell=a_c-a_1<n$.
			So
			\[
			\frac{2(c-1)}c (p_q\#)^{1/\binom{c}2} <n.
			\]
			Contrapositively, if all the $a_i< n\le \frac{2(c-1)}c (p_q\#)^{1/\binom{c}2}$ then
			there exists a prime $p_j\le p_q$ such that all the $a_i$ are non-congruent mod $p_j$.
		\end{proof}

		Let $\log$ with no subscript denote the natural logarithm.
		\begin{df}
			The first Chebyshev function $\vartheta:\mathbb N\to\mathbb R$ is defined by
			$\vartheta(x)=\sum_{i=1}^{q}\log p_i=\log(x\#)$ for $p_q\le x< p_{q+1}$.
		\end{df}
		\begin{thm}[{\cite[equation (3.16)]{MR137689}}]\label{rosser}
			\[
			x(1-1/\log x)<\vartheta(x),\quad \text{for all }x\ge 41.
			\]
		\end{thm}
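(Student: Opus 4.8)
This is the classical Rosser--Schoenfeld lower bound on the first Chebyshev function, and it is in effect an explicit form of the Prime Number Theorem $\vartheta(x)\sim x$. Since the factor $1/\log x\to 0$, no elementary Chebyshev-style argument via central binomial coefficients can suffice: those yield only $\vartheta(x)>Ax$ with a fixed constant $A<1$, which for large $x$ is weaker than the claimed $x(1-1/\log x)$. The plan is therefore to go through the second Chebyshev function $\psi(x)=\sum_{p^k\le x}\log p$ and the von Mangoldt explicit formula, controlling the contribution of the nontrivial zeros of the Riemann zeta function.

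First I would pass from $\vartheta$ to $\psi$. Using $\psi(x)=\sum_{k\ge 1}\vartheta(x^{1/k})$, where the sum is finite because $\vartheta(x^{1/k})=0$ once $x^{1/k}<2$, one gets $\vartheta(x)=\psi(x)-\sum_{k\ge 2}\vartheta(x^{1/k})$. The correction is dominated by $\vartheta(\sqrt x)=O(\sqrt x)$, and the entire tail is $O(\sqrt x)$, which is negligible next to $x/\log x$. Thus it suffices to show $\psi(x)>x-x/\log x+O(\sqrt x)$ for the right range of $x$.

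The heart of the matter is the explicit formula
\[
\psi(x)=x-\sum_{\rho}\frac{x^{\rho}}{\rho}-\log(2\pi)-\tfrac12\log(1-x^{-2}),
\]
with $\rho$ ranging over the nontrivial zeros of $\zeta$. The main obstacle, and the genuinely deep step, is bounding $\sum_\rho x^\rho/\rho$. I would split the zeros by imaginary part: for low-lying zeros I invoke the numerical verification that they lie on the critical line $\Re\rho=1/2$ (so that $\abs{x^\rho}=\sqrt x$), and for high zeros I use a classical de la Vall\'ee Poussin zero-free region $\sigma>1-c/\log\abs{t}$ together with the zero-counting estimate $N(T)=O(T\log T)$. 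Optimizing the cutoff height yields $\psi(x)=x+O\!\bigl(x\exp(-c\sqrt{\log x})\bigr)$ with explicit constants.

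Finally I would observe that the error term $x\exp(-c\sqrt{\log x})$ is $o(x/\log x)$, since $\sqrt{\log x}\gg\log\log x$; hence $\psi(x)>x-x/\log x$ for all sufficiently large $x$, say $x\ge X_0$, and adding back the $O(\sqrt x)$ correction preserves the inequality $\vartheta(x)>x(1-1/\log x)$ in that range. For the remaining finite range $41\le x<X_0$ the bound is settled by direct computation of $\vartheta$ at the primes. Making the analytic constants sharp enough that $X_0$ is small enough to render this finite check feasible is exactly where Rosser and Schoenfeld's careful explicit estimates (and the available numerical zero data) carry the weight.
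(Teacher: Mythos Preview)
Your sketch is a reasonable high-level outline of the Rosser--Schoenfeld argument, but note that the paper does not prove this theorem at all: it is stated as a citation of \cite[equation (3.16)]{MR137689} and used as a black box. There is therefore no ``paper's own proof'' to compare against; the intended treatment here is simply to quote the result from the literature.

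That said, as a sketch of the original argument your outline is broadly on target: passing from $\vartheta$ to $\psi$, invoking the explicit formula, splitting zeros into a low range (handled by numerical verification of the Riemann Hypothesis up to some height) and a high range (handled by an explicit zero-free region and zero-counting bounds), and then a finite check below some $X_0$. The genuinely hard part, which your last paragraph gestures at, is making every constant explicit and small enough that the computational range is feasible; that is precisely the content of the Rosser--Schoenfeld paper and cannot be reproduced in a few lines. For the purposes of the present paper, a citation is the correct level of detail.
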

		\begin{cor}\label{feb27-2022-2pm}
			For each $\eps>0$, and all sufficiently large $x$,
			\[
			(1-\eps)\exp(x)\le x\#.
			\]
		\end{cor}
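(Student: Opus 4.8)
The plan is to obtain the statement directly from Rosser's bound, \Cref{rosser}, together with the identity $\log(x\#)=\vartheta(x)$ that is built into the definition of the first Chebyshev function. Since $\log$ denotes the natural logarithm, $\exp$ and $\log$ are mutually inverse, so taking logarithms reduces the claimed inequality to $(1-\eps)x\le\vartheta(x)$; hence it suffices, for each fixed $\eps>0$, to exhibit a threshold beyond which $\vartheta(x)\ge(1-\eps)x$.

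First I would fix $\eps>0$ and apply \Cref{rosser}, which gives $\vartheta(x)>x\bigl(1-1/\log x\bigr)$ for all $x\ge 41$. Next I would observe that $1-1/\log x>1-\eps$ exactly when $\log x>1/\eps$, that is, when $x>\exp(1/\eps)$. Combining the two, for every $x>\max\bigl(41,\exp(1/\eps)\bigr)$ we get $\vartheta(x)>(1-\eps)x$, and exponentiating yields $x\#=\exp(\vartheta(x))>\exp\bigl((1-\eps)x\bigr)$, which is the desired lower bound on $x\#$ (the inequality is even strict, so the non-strict form is immediate).

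I do not expect a genuine obstacle: the corollary is essentially a one-line consequence of Rosser's explicit estimate. The only care needed is bookkeeping — tracking the two thresholds $x\ge 41$ (from \Cref{rosser}) and $x>\exp(1/\eps)$ (from $1/\log x<\eps$) — and checking that what is produced is exactly the lower bound of the shape $\exp((1-\eps)x)$ that is later needed, via \Cref{feb16-2022}, to bound the relevant prime $p_q$ by $O(c^2\log n)$. One could instead quote the prime number theorem in the form $\vartheta(x)\sim x$, but \Cref{rosser} already makes the bound effective at no extra cost, so I would use it.
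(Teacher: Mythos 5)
Your reduction step contains a slip that turns out to matter: the logarithm of $(1-\eps)\exp(x)$ is $x+\log(1-\eps)$, not $(1-\eps)x$. What you actually prove --- cleanly and correctly --- is the inequality $\exp\bigl((1-\eps)x\bigr)\le x\#$, i.e.\ $\vartheta(x)\ge(1-\eps)x$ for $x>\max\bigl(41,\exp(1/\eps)\bigr)$. The statement as printed, $(1-\eps)\exp(x)\le x\#$, is equivalent to $\vartheta(x)\ge x+\log(1-\eps)$, i.e.\ to $\vartheta(x)-x$ being bounded below by a constant. That does \emph{not} follow from \Cref{rosser}, whose lower bound is $x-x/\log x$ with $x/\log x\to\infty$; and it is in fact false, since by classical oscillation results $\vartheta(x)-x$ is not bounded below (it is $\Omega_-(\sqrt{x})$). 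So as a proof of the literal statement, your argument has a genuine gap at the very first step, where the two inequalities are conflated.

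That said, the paper's own proof has exactly the same defect: it rewrites \Cref{rosser} as $\exp(x)/\exp(x/\log x)<x\#$ and stops, even though $\exp(x)\exp(-x/\log x)$ is eventually far smaller than $(1-\eps)\exp(x)$. The version you prove, $x\#\ge\exp((1-\eps)x)$, is the true statement and is what the intended application can live with: in \Cref{feb27-2022-3pm} and \Cref{feb17-pm-2022} the hypothesis becomes $n\le \frac{2(c-1)}{c}\exp\bigl((1-\eps)p_q/\binom{c}{2}\bigr)$ and the resulting bound on $p_q$ becomes $\frac{1}{1-\eps}\binom{c}{2}\log n+O(1)$ instead of $\binom{c}{2}\log n$, which is still $O(c^2\log n)$ and still $<n/3$ for large $n$. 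So your argument is essentially a correct proof of the corollary as it ought to have been stated; but you should say explicitly that you are proving a different (weaker) inequality than the printed one, rather than asserting that the two are related by taking logarithms.
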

		\begin{proof}
		\Cref{rosser} is equivalent to
			\[
			\frac{\exp(x)}{\exp(x/\log x)}<x\#,\quad \text{for }41\le x.
			\]
		\end{proof}
		\begin{cor}\label{feb27-2022-3pm}
			Let $\eps>0$ and $c\in\mathbb N$.
			Let $0\le a_1<\dots < a_c<n$, all integers, with $n$ sufficiently large.
			Let $q\in\mathbb N$ be such that
				$n\le \frac{2(c-1)}c ((1-\eps)\exp(p_q))^{1/\binom{c}2}$.
			There exists a modulus $m$ and an integer $j\le q$ such that $m=p_j\le p_q$ and all the $a_i$ are distinct mod $m$.
		\end{cor}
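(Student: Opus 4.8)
The plan is to obtain \Cref{feb27-2022-3pm} as an immediate consequence of \Cref{feb16-2022} together with the primorial lower bound of \Cref{feb27-2022-2pm}. \Cref{feb16-2022} already delivers a modulus $m=p_j\le p_q$ with all $a_i$ distinct mod $m$ under the hypothesis $n\le\frac{2(c-1)}{c}(p_q\#)^{1/\binom{c}2}$, so all that is needed is to show that the hypothesis stated here, phrased with $\exp(p_q)$ in place of $p_q\#$, implies that one once $n$ is large.

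First I would dispose of the trivial case $c=1$ (and note that $\eps<1$ is forced, since otherwise $(1-\eps)\exp(p_q)\le0<n$): a single residue $a_1$ is vacuously pairwise non-congruent modulo $m=2=p_1$. So assume $c\ge2$, whence $\binom{c}2\ge1$ and the displayed inequalities make sense. Next I would observe that the hypothesis forces $p_q$ to grow with $n$: rearranging $n\le\frac{2(c-1)}{c}\bigl((1-\eps)\exp(p_q)\bigr)^{1/\binom{c}2}$ yields $p_q\ge\binom{c}2\log\!\bigl(\tfrac{cn}{2(c-1)}\bigr)+\log\tfrac1{1-\eps}$, and since $c$ is fixed the right-hand side tends to infinity with $n$. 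Hence for all $n$ beyond a threshold depending only on $\eps$ and $c$, the index $q$ --- whatever it is --- has $p_q$ at least as large as the threshold appearing in \Cref{feb27-2022-2pm}.

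Applying \Cref{feb27-2022-2pm} with $x=p_q$ then gives $(1-\eps)\exp(p_q)\le p_q\#$. Raising both (positive) sides to the power $1/\binom{c}2$ and multiplying by $\frac{2(c-1)}{c}$ preserves the inequality, so
\[
n\le\frac{2(c-1)}{c}\bigl((1-\eps)\exp(p_q)\bigr)^{1/\binom{c}2}\le\frac{2(c-1)}{c}\,(p_q\#)^{1/\binom{c}2}.
\]
Now \Cref{feb16-2022} applies verbatim and produces the desired $m=p_j\le p_q$.

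I do not expect a genuine obstacle; the argument is essentially a one-line substitution. The only point requiring care is checking that ``$n$ sufficiently large'' can be taken uniform --- depending on $\eps$ and $c$ but not on $q$ or on the particular integers $a_i$ --- which is exactly what the monotone lower bound $p_q\ge\binom{c}2\log\bigl(\tfrac{cn}{2(c-1)}\bigr)$ derived above provides.
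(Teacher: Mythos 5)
Your proof is correct and takes the same route as the paper, whose entire proof is ``From \Cref{feb27-2022-2pm} and \Cref{feb16-2022}.'' Your additional checks (the case $c=1$, and the observation that the hypothesis forces $p_q\to\infty$ with $n$ so that \Cref{feb27-2022-2pm} applies) merely make explicit what the paper leaves implicit.
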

		\begin{proof}
		From \Cref{feb27-2022-2pm} and \Cref{feb16-2022}.
		\end{proof}

		\begin{thm}\label{feb17-pm-2022}
			Let $c\in\mathbb N$ and let $n$ be sufficiently large.
			For any $a_1<\dots<a_c<n$, for each $y\in\{0,1\}^n$, there is an $x\in\{0,1\}^n$
			which agrees with $y$ on the $a_i$, and satisfies $A_N(x)\le \binom{c}2\log n<n/3$.
		\end{thm}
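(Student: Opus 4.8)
The plan is to make $x$ a purely periodic word whose period length $m$ is small and has the property that the constrained positions $a_1<\dots<a_c$ are pairwise distinct modulo $m$; then the prescribed bits of $y$ can be placed into a single period without conflict, the remaining bits of that period filled arbitrarily, and the block repeated. The complexity bound then comes for free from \Cref{jun20-2022}, while the number-theoretic machinery developed above supplies a period $m$ of size roughly $\binom c2\log n$.

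We may assume $c\ge 2$, since for $c\le 1$ there is essentially nothing to enforce. First I would fix $\eps\in(0,1)$ small (depending only on $c$) and let $q$ be the least integer with $n\le \tfrac{2(c-1)}c\bigl((1-\eps)\exp(p_q)\bigr)^{1/\binom c2}$; such a $q$ exists because the right-hand side is increasing and unbounded in $q$. Since $n$ is large, \Cref{feb27-2022-3pm} then furnishes a prime $m=p_j\le p_q$ modulo which $a_1,\dots,a_c$ are pairwise incongruent.

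Next I would build $x$ and check the two requirements. Indexing a word of length $\ell$ by $\{0,\dots,\ell-1\}$, define $u\in\{0,1\}^m$ by setting $u_{\,a_i\bmod m}=y_{a_i}$ for each $i$ — well defined precisely because the residues $a_i\bmod m$ are distinct — and $u_t=0$ for the remaining indices $t<m$. Let $x$ be the prefix of $u^{\omega}=uuu\cdots$ of length $n$; equivalently $x=u^{\alpha}$ with $\alpha=n/m$, and $\alpha\ge 1$ because $m\le p_q<n$ for large $n$. Since $x_k=u_{\,k\bmod m}$ for every $k<n$, in particular $x_{a_i}=u_{\,a_i\bmod m}=y_{a_i}$, so $x$ agrees with $y$ on the $a_i$; and since $x$ is an $\alpha$-power, \Cref{jun20-2022} gives $A_N(x)\le\abs{x}/\alpha=m$.

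It remains to bound $m\le p_q$. By the minimality of $q$ the defining inequality fails at $q-1$, and after taking logarithms this reads $p_{q-1}<\binom c2\log\tfrac{cn}{2(c-1)}+\log\tfrac1{1-\eps}$, whose leading term is $\le\binom c2\log n$ because $\tfrac{2(c-1)}c\ge 1$ for $c\ge 2$. Combining this with \Cref{rosser} (equivalently \Cref{feb27-2022-2pm}) to control $\vartheta$, and with the standard fact that the gap $p_q-p_{q-1}$ is of lower order, yields $m\le p_q\le\binom c2\log n$ for all sufficiently large $n$; and $\binom c2\log n<n/3$ for large $n$ since $c$ is fixed. The periodic construction and the appeal to \Cref{jun20-2022} are routine. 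The one genuinely delicate point — the main obstacle — is this last estimate: one must verify that the $O(1/\log)$-type error term in the prime number theorem together with the size of the gap to the next prime both fit inside the slack afforded by the constant $\log\tfrac1{1-\eps}$ and by the factor $\tfrac{c}{2(c-1)}\le 1$, i.e., inside the lower-order terms absorbed by "$n$ sufficiently large.''
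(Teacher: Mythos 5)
Your proof follows essentially the same route as the paper's: invoke \Cref{feb27-2022-3pm} to obtain a prime modulus $m\le p_q$ separating the $a_i$, copy the prescribed bits of $y$ into a single period of length $m$, extend periodically to length $n$, and bound $A_N(x)\le m$ via \Cref{jun20-2022}. The prime-gap worry you flag at the end is genuine, but the paper's own proof sidesteps it even less carefully (it simply sets $n$ equal to the threshold ``within $\pm 1$'' and solves for $p_q$); in both versions the lower-order slack affects only the cosmetic constant in $\binom{c}{2}\log n$, not the inequality $A_N(x)<n/3$ that is actually used in \Cref{flipTurn}.
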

		\begin{proof}
			Notice that if we take $n=\frac{2(c-1)}c ((1-\eps)\exp(p_q))^{1/\binom{c}2}$ (within $\pm 1$) in \Cref{feb27-2022-3pm} then
			solving for $p_q$, if $c\ge 2$ we have
			\begin{eqnarray*}
			p_q	&=&\log\left\{\left(\frac{c}{2(c-1)}n\right)^{\binom{c}2}\cdot \frac{1}{1-\eps}\right\}\\
				&=&\binom{c}2 \left(\log\frac{n}2 + \log\frac{c}{(c-1)}\right) - \log (1-\eps)\\
				&\le& \binom{c}2 \log n,
			\end{eqnarray*}
			and so
			\begin{equation}\label{feb17-2022}
				m\le p_q=\binom{c}2\log n.
			\end{equation}
			Take $x$ to be a (non-integral, in general) power of $z$, $x=z^{n/m}$, $\abs{z}=m$ as in~\eqref{feb17-2022}, where $z(a_i\mod m)=y(a_i)$ for each $i$.
			In other words, given $y\in\{0,1\}^n$, consider $y$ on $\{a_1,\dots,a_c\}$, copy those values $(y(a_1),\dots,y(a_c))$ to a $z\in\{0,1\}^m$, then extend that
			$z$ to $x\in\{0,1\}^n$ by repetition.
		\end{proof}
		\begin{rem}
			Let us consider an example of the construction in \Cref{feb17-pm-2022}.
			Let $c=12$, $m=14$ (admittedly not a prime, but optimal in this example),
			and let $a_i$ and $b_i=a_i$ mod $m$ for each $i$ be as follows.
			\begin{eqnarray*}
				(a_1,\dots,a_{12})	&=&	(3,4,5,7,8,11,20,23,24,26,27,28)\\
				(b_1,\dots,b_{12})	&=&	(3,4,5,7,8,11,6,9,10,12,13,0)
			\end{eqnarray*}
			We let $y$ be an interesting word from~\cite[Theorem 20]{MR3828751} and let $z$ be obtained as in \Cref{feb17-pm-2022}.
			We underline the positions $a_i$ in $x$ and $y$.
			\begin{eqnarray*}
				y &=& \mt{000\ul{1}\ul{0}\ul{1}0\ul{1}\ul{1}01\ul{0}00011001\ul{0}01\ul{1}\ul{1}1\ul{1}\ul{0}\ul{1}11}\\
				x &=& \mt{122\ul{1}\ul{0}\ul{1}0\ul{1}\ul{1}11\ul{0}10122101\ul{0}11\ul{1}\ul{1}0\ul{1}\ul{0}\ul{1}22} = z^{31/14}\\
				z &=& \mt{12210101111010}
			\end{eqnarray*}
			The occurrences of $\mt{2}$ in $z$ can be replaced by either $\mt 0$ or $\mt 1$ as they are unconstrained.
			The NFA showing that $A_N(x)\le m$ is as follows
			\[
				\xymatrix{
				\mathrm{start}\ar[r]&	*+[Fo]{q_0}\ar[r]_{\mt 1}	&	*+[Fo]{q_1}\ar[r]_{\mt 2}	&	*+[Fo]{q_2}\ar[r]_{\mt 2}	&	*+[Foo]{q_3}\ar[r]_{\mt 1}	&	*+[Fo]{q_4}\ar[r]_{\mt 0}	&	*+[Fo]{q_5}\ar[r]_{\mt 1}	&	*+[Fo]{q_6}\ar[d]_{\mt 0}\\
									&	*+[Fo]{q_{\mathrm D}}\ar[u]_{\mt 0}	&	*+[Fo]{q_{\mathrm C}}\ar[l]_{\mt 1}	&	*+[Fo]{q_{\mathrm B}}\ar[l]_{\mt 0}	&	*+[Fo]{q_{\mathrm A}}\ar[l]_{\mt 1}			&	*+[Fo]{q_9}\ar[l]_{\mt 1}	&	*+[Fo]{q_8}\ar[l]_{\mt 1}	&	*+[Fo]{q_7}\ar[l]_{\mt 1}
				}
			\]
			 where the final state, $q_3$, is indicated by a double circle.
		\end{rem}

		\Cref{feb17-pm-2022} showed that that it is possible for a low-complexity word to agree with a given high-complexity word in given positions.
		\Cref{thm:july31} will be used for a converse problem: showing that a high-complexity word can agree with a given low-complexity word in given positions.
		\begin{thm}[{\cite{kjos-hanssen_2021}}]\label{thm:july31}
			Let $\mathbb P_n$ denote the uniform probability measure on words $x\in\Gamma^n$, where $\Gamma$ is a finite alphabet of cardinality at least 2.
			For all $\epsilon>0$,
			\[
				\lim_{n\to\infty}\mathbb P_n\left(\left|\frac{A_N(x)}{n/2}-1\right|<\epsilon\right)=1.
			\]
		\end{thm}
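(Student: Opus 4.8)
The upper bound is immediate and uses no randomness: by \Cref{Hyde}, $A_N(x)\le\lfloor n/2\rfloor+1$ for every $x\in\Gamma^n$, so $A_N(x)/(n/2)\le 1+2/n\to 1$. The content is therefore the lower bound, for which the plan is a first-moment argument. Fix $\epsilon\in(0,1)$, set $q=q(n)=\lfloor(1-\epsilon)n/2\rfloor$, and let $N(n,q)=\#\{x\in\Gamma^n:A_N(x)\le q\}$. It is enough to show $N(n,q)=o(\abs{\Gamma}^n)$, for then $\mathbb{P}_n(A_N(x)\le q)=N(n,q)/\abs{\Gamma}^n\to 0$, and together with the upper bound this gives $\mathbb{P}_n(\abs{A_N(x)/(n/2)-1}<\epsilon)\to 1$.

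To bound $N(n,q)$, first pass from words to witnesses. If an NFA $M$ with at most $q$ states uniquely accepts a word of length $n$ in the sense of \Cref{precise}, then $M$ accepts exactly one word of length $n$; moreover we may delete all states and transitions not used on the accepting path and replace $F$ by the singleton consisting of the final state actually reached, so that $M$ is ``reduced''. Hence $N(n,q)$ is at most the number of reduced NFAs on at most $q$ states. The trivial bound on this number is $\abs{\Gamma}^{\Theta(q^2)}$, which is hopeless since $q=\Theta(n)$; the crux of the proof is a normal-form theorem bringing it down to $\abs{\Gamma}^{2q}\cdot n^{O(1)}$, which already suffices because $2q=(1-\epsilon)n<n$ forces $N(n,q)\le\abs{\Gamma}^{(1-\epsilon)n}n^{O(1)}=o(\abs{\Gamma}^n)$.

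The normal form I would aim to establish is that a reduced NFA witnessing $A_N(x)=q$ can be chosen so that its transition digraph is a simple directed path through its $q$ states together with only a sublinear number of further edges, so that the whole automaton is specified by the $\Gamma$-labels of its $q+o(q)$ edges plus $O(q)$ additional bits for the start state and the location of the extra edges. The mechanism forcing this rigidity is the interplay of minimality with uniqueness. If a state is visited at times $i_0<i_1<i_2$, write $L_1=x[i_0{+}1..i_1]$ and $L_2=x[i_1{+}1..i_2]$ for the two return loops; if $L_1L_2\ne L_2L_1$, then reading them in the opposite order yields a second length-$n$ word accepted by $M$, contradicting \Cref{precise}, so $L_1L_2=L_2L_1$ and hence (Fine--Wilf) $L_1,L_2$ are powers of one primitive word; iterating over all visits to the state shows that its return loops are all powers of a single word, i.e.\ locally the digraph behaves like one simple cycle (this refines \cite[Theorem~16]{MR3386523}, the special case used in \Cref{feb27-2022-6pm}(i), and matches the witnesses of \Cref{jun20-2022} and \Cref{77}, a single cycle and a path with two chords). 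The main obstacle is precisely to promote this local structure to the required global count of reduced witnesses, with a constant sharp enough to keep the exponent below $n$. If a clean normal form proves elusive, a fallback is to show directly that $A_N(x)\le(1-\epsilon)n/2$ forces $x$ to contain repetitive structure of total length at least $(1+\epsilon)n/2$ that cannot be packed into blocks of period $O(\log n)$, and then to combine a quantitative converse of \Cref{jun20-2022} with the standard estimate that a uniformly random word almost surely contains no square of period exceeding $C\log n$.
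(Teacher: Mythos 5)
The paper does not prove \Cref{thm:july31} at all: it is imported as a black box from the reference [kjos-hanssen\_2021] (the incompressibility theorem for nondeterministic automatic complexity), so there is no in-paper argument to compare yours against. Judged on its own terms, your proposal correctly disposes of the easy half (the upper bound is indeed immediate from \Cref{Hyde}) and correctly identifies the natural strategy for the hard half, namely a first-moment count of the words uniquely accepted by some NFA with at most $q=(1-\epsilon)n/2$ states. But the proposal is a plan, not a proof: the entire content is concentrated in the ``normal form'' claim that a reduced witness is determined by roughly $q+o(q)$ edge labels plus a small amount of side information, and you explicitly leave that claim unestablished (``the normal form I would aim to establish\dots the main obstacle is precisely to promote this local structure to the required global count''). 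That promotion from local to global is exactly where the real difficulty lives, and it is why the cited result required a paper of its own.

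Two concrete weaknesses beyond the admitted gap. First, the local rigidity you derive via Fine--Wilf (all return words at a fixed state commute, hence are powers of one primitive word) constrains the \emph{words} spelled by loops at one state, but it does not by itself control how the $n-q\approx(1+\epsilon)n/2$ revisits are distributed among states, nor how loops at different states interleave along the accepting walk; the digraph need not globally resemble a path with few chords. Second, your accounting is too lossy even as a target: budgeting ``$O(q)$ additional bits'' for the extra structure contributes a factor $2^{Cq}$ with $q=\Theta(n)$, and since $\epsilon$ must be allowed to be arbitrarily small, any fixed $C>0$ (over a binary alphabet, any $C>1$ even under your own edge count) swamps the available slack $\abs{\Gamma}^{\epsilon n}$; you would need the side information to be $o(n)$ bits, which is a strictly stronger structural statement than the one you pose. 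The fallback sketch (random words have no long-period squares, plus a quantitative converse of \Cref{jun20-2022}) has the same status: the needed converse is asserted, not proved, and \Cref{feb27-2022-6pm}(ii) shows that naive converses of this type fail.
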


	\subsection{Defining the class $\SAC^0$}

		The main fact about $\SAC^0$ that we shall use is \Cref{feb27-2022-8pm}.
		\begin{thm}\label{feb27-2022-8pm}
			If $A\subseteq\Sigma^*$ is in $\SAC^0$ then
			there is a constant $c$ such that for each $n$,
			there is a formula $\psi=\bigvee_{i=1}^k\varphi_i$ such that each $\varphi_i$ mentions at most $c$ many variables,
			and for $x=(x_1,\dots,x_n)\in\Sigma^n$, $x\in A$ iff $\psi(x)$ holds.
		\end{thm}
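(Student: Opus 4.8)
The plan is to use the standard notion of an \emph{accepting proof tree} (a certificate) for a semi-unbounded circuit, and to exploit the fact that constant depth together with bounded fan-in of the $\wedge$-gates forces every proof tree to have only constantly many leaves. By definition of $\SAC^0$ we may fix a family $(C_n)_{n\in\mathbb N}$ of circuits computing $A$ in which the $\vee$-gates have unbounded fan-in, each $\wedge$-gate has fan-in at most a constant $b$, negations occur only at the inputs, and each $C_n$ has depth at most a constant $d$. Fix $n$ and write $C=C_n$; set $c:=b^{d}$.

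First I would define a proof tree of $C$ to be a finite rooted tree whose nodes are labelled by gates of $C$, with the root labelled by the output gate, such that: a node labelled by an input literal is a leaf; a node labelled by a $\vee$-gate has exactly one child, labelled by one of the inputs to that gate; and a node labelled by a $\wedge$-gate has exactly one child for each input of that gate. To a proof tree $T$ I would attach the conjunction $\varphi_T=\bigwedge_v \ell_v$ over the leaves $v$ of $T$ of the input literals $\ell_v$ labelling them (for $\abs{\Sigma}>2$, read ``literal'' as a condition of the form ``$x_j=a$''; this costs only a constant factor below).

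The key step is the leaf count: a bottom-up induction shows that a proof tree rooted at a gate of height $h$ (longest path down to a leaf of length $h$) has at most $b^{h}$ leaves — a $\vee$-node inherits the leaf count of its unique child, and a $\wedge$-node's leaf count is the sum over its at most $b$ children. Since $C$ has depth at most $d$, every proof tree of $C$ has at most $c$ leaves, so $\varphi_T$ mentions at most $c$ variables; and since $C$ is finite and proof trees have bounded depth there are only finitely many of them, say $T_1,\dots,T_k$. I would then check, by the two routine inductions, that for $x\in\Sigma^n$ one has $x\in A$ iff $\varphi_{T_i}(x)=1$ for some $i$: if $\varphi_{T_i}(x)=1$ then every gate labelling a node of $T_i$ evaluates to $1$ on $x$ (induction from the leaves up), so in particular the output gate does; conversely, if $C(x)=1$ one builds a proof tree top-down, at each $\vee$-gate choosing some input evaluating to $1$ and at each $\wedge$-gate keeping all inputs, and its leaf literals are then satisfied by $x$. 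Taking $\psi=\bigvee_{i=1}^k\varphi_{T_i}$ finishes the argument.

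The one real subtlety — the main point to get right — is that the hypothesis is used in exactly the right place: it is the combination of \emph{constant depth} and \emph{bounded $\wedge$-fan-in} that bounds the number of leaves, and it is essential that negations occur only at the inputs. If internal $\neg$-gates were allowed, pushing them to the leaves via De~Morgan would transfer the fan-in restriction onto the $\vee$-gates and the argument would collapse; indeed the analogue of the theorem fails for $\AC^0$, where the single conjunction $x_1\wedge\cdots\wedge x_n$ is not a disjunction of constant-width formulas. A minor bookkeeping matter is the Boolean encoding of $\Sigma$-letters: a conjunction of at most $c$ Boolean literals involves at most $c$ Boolean variables and hence at most $c$ letter-positions, so the bound survives the encoding after a harmless adjustment of the constant.
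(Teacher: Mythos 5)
Your proposal is correct and takes essentially the same approach as the paper: the paper observes that repeated use of the distributive law pushes all unbounded fan-in $\vee$-gates to the output, and your proof-tree enumeration is exactly the rigorous bookkeeping of that distribution, with the same key bound $c=b^d$ coming from constant depth together with bounded $\wedge$-fan-in and negations only at the inputs. The only difference is presentational --- the paper treats the normal form as already well known, while you carry out the certificate argument in full.
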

		\begin{rem}
			The literature is not consistent on whether $\SAC^0$ is by definition uniform, i.e., uniformly efficiently computable.
			Aaronson~\cite{Aaronson} states that ``a uniformity condition may also be imposed''.
			Our result does not require the circuit families to be uniform.
		\end{rem}
		\begin{rem}\label{mar1-2022}
			The requirement of polynomial size circuits, while presumably important for $\SAC^k$ with $k\ge 1$, is redundant for $\SAC^0$.
			When the depth is bounded by $c$, the number of variables in any $\varphi_i$ is bounded by another constant $d=d(c)$, and
			there are only $e=e(d)$ many formulas in $d$ variables, so that the $\varphi_i$ are chosen from a set of size at most $\binom{n}d e(d)$, which is a polynomial
			in $n$ of degree $d$.
		\end{rem}

		\Cref{feb27-2022-8pm} can be considered already well known.
		Borodin et al.~\cite[page 560]{MR996836}, referring back to Venkateswaran~\cite[page 383]{MR1130778} describe the semi-unbounded fan-in circuit model as follows.
		\begin{quote}
			[...] in this model, we allow OR gates with arbitrary fan-in, whereas all
			AND gates have bounded fan-in. Input variables and their negations are supplied, but
			negations are prohibited elsewhere.
		\end{quote}
		An example of a circuit computing the function $f(x_1,x_2)=x_2$, with some redundant gates:
		\[
		\xymatrix{
		x_1\ar[dr]	&		&x_2\ar[dl]\ar[dr]	&	&\overline{x_1}\ar[dl]\\
					& *+[Fo]{\wedge}\ar[dr]&					&*+[Fo]{\wedge}\ar[dl]&	\\
					&		&	*+[Fo]{\vee}			&\\
		}
		\]
		It is important for us to note that for the described model, all unbounded fan-in OR gates may be assumed to be at the maximum depth, i.e.,
		the OR gates is the output gate and is the furthest removed from the input variables and their negations.
		This is shown by repeated use of the distributive law for $\wedge$ and $\vee$.

	\subsection{Immunity to $\SAC^0$}

		\begin{df}
			Let $L\subseteq\Sigma^*$ and $\mathsf C\subseteq\mathcal P(\Sigma^*)$ where $\mathcal P$ denotes power set.
			Let $\mathsf{Inf}=\{L: L \text{ is an infinite set}\}$.
			\begin{itemize}
				\item The set $L$ is $\mathsf{C}$-immune if for all $C\in\mathsf{Inf}$, $C\in\mathsf{C}\implies C\not\subseteq L$.
				\item The set $L$ is $\mathsf{C}$-coimmune if $\Sigma^*\setminus L$ is $\mathsf{C}$-immune.
			\end{itemize}
			Writing $\mathsf{coC}=\{L\subseteq\Sigma^*: L\not\in\mathsf{C}\}$, we then also have:
			\begin{itemize}
				\item The set $L$ is $\mathsf{coC}$-immune if for all $C\in\mathsf{Inf}$, $C\not\in\mathsf{C}\implies C\not\subseteq L$.
				\item The set $L$ is $\mathsf{coC}$-coimmune if for all $C\in\mathsf{Inf}$, $C\not\in\mathsf{C}\implies C\not\subseteq \Sigma^*\setminus L$.
			\end{itemize}
		\end{df}
		\begin{rem}\label{feb22-2022-twosday}
			We can think of immune sets as ``small'' and coimmune sets as ``big'' in some sense.
			Conversely, if a set $L$ is both $\mathsf{C}$-immune and $\mathsf{C}$-coimmune then it points to an ability to avoid having subsets in $\mathsf{C}$,
			i.e., sets in $\mathsf{C}$ may tend to be ``big'' in some sense.

			From this point of view, \Cref{flipTurn} says that $L_{2,3}$ is neither big nor small (or if the reader prefers, both big and small),
			and many of the sets in the class $\SAC^0$ are rather large.
		\end{rem}
		\begin{thm}\label{flipTurn}
			$L_{2,3}$ is $\SAC^0$-immune and $\SAC^0$-coimmune.
		\end{thm}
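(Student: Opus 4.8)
The plan is to handle both halves by a single device: given an infinite $C\in\SAC^0$, use \Cref{feb27-2022-8pm} to pin down a subcube of $\{0,1\}^n$, of codimension bounded by an absolute constant, that is entirely contained in $C$, and then fill in the free coordinates — once so as to produce a low-complexity word (yielding immunity), once so as to produce a high-complexity word (yielding coimmunity).

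Concretely, fix an infinite $C\in\SAC^0$ and let $c$ be the constant furnished by \Cref{feb27-2022-8pm}. Since the alphabet is finite, $C$ contains words of arbitrarily large length; pick $n$ as large as the conditions below require, with $C\cap\{0,1\}^n\ne\emptyset$, and fix $x_0\in C\cap\{0,1\}^n$. Let $\psi=\bigvee_i\varphi_i$ be the associated formula, each $\varphi_i$ mentioning at most $c$ of the variables $x_1,\dots,x_n$. Since $\psi(x_0)$ holds, some $\varphi_i(x_0)$ holds; let $a_1<\dots<a_c$ list the positions $\varphi_i$ depends on (padding with arbitrary positions if there are fewer than $c$). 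Then every $x\in\{0,1\}^n$ agreeing with $x_0$ at $a_1,\dots,a_c$ satisfies $\varphi_i$, hence $\psi$, hence belongs to $C$; there are $2^{n-c}$ such $x$.

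For immunity, apply \Cref{feb17-pm-2022} with $y=x_0$ and these positions: for $n$ large it yields an $x$ agreeing with $x_0$ at the $a_i$ with $A_N(x)\le\binom{c}{2}\log n<n/3$, so $x\in C$ and $x\notin L_{2,3}$, giving $C\not\subseteq L_{2,3}$. For coimmunity, I would instead count: choosing $\epsilon<1/3$ in \Cref{thm:july31}, the fraction of $x\in\{0,1\}^n$ with $A_N(x)\le n/3\le(1-\epsilon)\tfrac{n}{2}$ tends to $0$, so for $n$ large it is below $2^{-c}$; hence fewer than $2^{n-c}$ words of length $n$ satisfy $A_N(x)\le n/3$, so the $2^{n-c}$-element subcube contains some $x$ with $A_N(x)>n/3$, giving $x\in C\cap L_{2,3}$ and $C\not\subseteq\{0,1\}^*\setminus L_{2,3}$. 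Since $C$ was an arbitrary infinite member of $\SAC^0$, this establishes $\SAC^0$-immunity and $\SAC^0$-coimmunity respectively.

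The assembly above is short; the substance lies entirely in the inputs. The structural reduction \Cref{feb27-2022-8pm} is, as the accompanying remarks indicate, essentially folklore, whereas \Cref{feb17-pm-2022} is the genuinely hard ingredient — it is precisely what the primorial and Chebyshev-function estimates were built for — and the present theorem is its payoff; the coimmunity half, by contrast, is elementary given \Cref{thm:july31}. The only thing to be careful about is bookkeeping the ``$n$ sufficiently large'' clauses: $n$ must be large enough for \Cref{feb17-pm-2022} at parameter $c$, for $\binom{c}{2}\log n<n/3$, and for the \Cref{thm:july31} density bound to drop below $2^{-c}$ — all thresholds depending only on $c$, hence harmless, since $C$ contains words of arbitrarily large length and so an admissible $n$ can always be found.
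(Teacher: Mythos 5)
Your proof is correct and follows essentially the same route as the paper: both use the bounded-disjunct structure from \Cref{feb27-2022-8pm} to isolate $c$ positions, then invoke \Cref{feb17-pm-2022} for immunity and \Cref{thm:july31} for coimmunity. Your version is if anything slightly more carefully quantified (the explicit subcube contained in $C$, and a counting argument in place of the paper's union bound, which are equivalent).
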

		\begin{proof}
			Let $\varphi_n$ be a $\SAC^0$ formula, representing a circuit family. Then $\varphi_n=\bigvee_i \varphi_{i,n}$ where $\varphi_{i,n}$ depends only on $c$ bits of the input,
			where $c$ does not depend on $n$ but the choice of $c$ bits can depend on both $n$ and $i$.

			Suppose some $\varphi_{i,n}$ determines membership in $L_{2,3}$ based on the bits $p_1<\dots <p_{c}$ of the input.

			To show $L_{2,3}$ is $\SAC^0$-coimmune, we show that the bits $p_i$ cannot guarantee low complexity, i.e., $x\not\in L_{2,3}$.
			Let $n$ be sufficiently large.
			Fix values $x(p_i)$ and select the values $x(k)$, $k\not\in\{p_1,\dots,p_c\}$, $1\le k\le n$, randomly (uniformly and independently).
			The probability of the values $x(p_i)$ is $2^{-c}$.
			Let $\epsilon=1/3$.
			Then
			\[
			\left|\frac{A_N(x)}{n/2}-1\right|<\epsilon \implies A_N(x)\ge  \frac{\lfloor n/2+1\rfloor}2>n/3 \implies x\in L_{2,3},
			\]
			By \Cref{thm:july31}, for large enough $n$,
			\[
				\mathbb P_n\left(\left|\frac{A_N(x)}{n/2}-1\right|<\epsilon\right)\ge 1-2^{-(c+1)}.
			\]
			Therefore, by the union bound, the probability of either disagreeing with some $x(p_i)$, or having low complexity, is at most
			\[
				(1-2^{-c}) + 2^{-(c+1)} < 1,
			\]
			so the probability of agreeing with the $x(p_i)$ and also having high complexity is positive.
			
			Since this occurs with positive probability, in particular it occurs for at least one $x$. For that $x$, $x\in L_{2,3}$, as desired.
			This shows that no single $\varphi_{i_0,n}$ can imply that $x\not\in L_{2,3}$.
			Since $\varphi_{i_0,n}$ implies $\bigvee_i\varphi_{i,n}$, neither can $\bigvee_i\varphi_{i,n}$ imply that $x\not\in L_{2,3}$.
			So $L_{2,3}$ is $\coSAC_0$-immune.

			To show $L_{2,3}$ is $\SAC^0$-immune, we show that the bits $p_i$ cannot guarantee high complexity, i.e., $x\in L_{2,3}$.

			This follows from \Cref{feb17-pm-2022}, which shows how in the limit we can force $x\not\in L_{2,3}$.
		\end{proof}

	\section{The class $\OSAC^0$}

		G\'al and Wigderson~\cite{MR1611744} considered arithmetic circuits with gates from the basis $\{+,-,\times\}$ over fields such as $\mathbb Z/2\mathbb Z$,
		i.e., $GF(2)$.
		All constants of the field may be used.
		Boolean circuits have the standard Boolean basis $\{\wedge,\vee,\neg\}$.
		\emph{Semi-unbounded fan-in} circuits have
		constant fan-in $\times$ (resp.~$\wedge$) gates and
		unbounded fan-in $+$ (resp.~$\vee$) gates.
		Semi-unbounded fan-in Boolean circuits may have negations only at the input level.
		$\SAC^k$ denotes the class of languages accepted by polynomial size, depth $O((\log n)^k)$ semi-unbounded fan-in Boolean circuits.
		$\OSAC^k$ denotes the class of languages accepted by polynomial size, depth $O((\log n)^k)$ semi-unbounded fan-in arithmetic circuits over $GF(2)$.

		G\'al and Wigderson showed that $\SAC^1\subseteq\OSAC^1$.
		We note that $\SAC^0\ne\coSAC^0$ was shown in~\cite{MR996836}.
		However, $\mathrm{co}\OSAC^0 = \OSAC^0$ since if $\varphi$ is an multilinear polynomial representing $L$ then $\varphi+1$ represents $\Sigma^*\setminus L$.
		Analogously to \Cref{feb27-2022-8pm} for $\SAC^0$, the main fact about $\OSAC^0$ that we shall use is \Cref{feb27-2022-later}.
		\begin{thm}\label{feb27-2022-later}
			If $A\subseteq\Sigma^*$ is in $\OSAC^0$ then there is a constant $c$ such that for each $n$,
			there is a formula $\psi=\bigoplus_{i=1}^k\varphi_i$ such that each $\varphi$ mentions at most $c$ many variables,
			and for $x=(x_1,\dots,x_n)\in\Sigma^n$, $x\in A$ iff $\psi(x)$ holds.
		\end{thm}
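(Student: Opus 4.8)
The plan is to run the argument behind \Cref{feb27-2022-8pm}, with the Boolean distributive law for $\wedge$ and $\vee$ replaced by the ordinary ring distributive law $u\cdot(v_1+\cdots+v_r)=uv_1+\cdots+uv_r$ over $GF(2)$, where subtraction coincides with addition.

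First I would fix, for the given $n$, an $\OSAC^0$ circuit computing the characteristic function of $A\cap\Sigma^n$: it has some constant depth $\le d$, every $\times$-gate has fan-in $\le b$ for a constant $b$, every $+$-gate has unbounded fan-in, and the leaves carry variables, negated variables, and the field constants $0,1$. Unfolding this circuit into a formula (a tree) leaves the depth unchanged. Then I would repeatedly apply distributivity to push the $+$-gates upward: whenever a $\times$-gate has a $+$-gate among its inputs, rewrite that subformula via $u\cdot\bigl(\bigoplus_j v_j\bigr)=\bigoplus_j\bigl(u\cdot v_j\bigr)$. Each such step duplicates some $\times$-subformulas and raises one $+$-gate by one level, but it never stacks $\times$-gates more deeply; so the process terminates with a formula whose output is a single $+$-gate (or, if no $+$-gate occurs at all, the formula is already a single product) each of whose inputs is a product of leaves. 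Since a root-to-leaf path in the original formula met at most $d$ gates, each such product involves at most $b^{d}=:c$ literals, a constant independent of $n$.

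Consequently each product $\varphi_i$ mentions at most $c$ variables, and putting $\psi=\bigoplus_{i=1}^{k}\varphi_i$, with $k$ the fan-in of the top $+$-gate, yields a formula of exactly the required shape; if one insists on the pure basis $\{+,\times\}$, replacing a negated literal $\overline{x_j}$ by $1+x_j$ enlarges no $\varphi_i$'s variable set. Because the circuit computes the $\{0,1\}$-valued characteristic function of $A\cap\Sigma^n$, we have $x\in A$ iff $\psi(x)=1$, i.e.\ iff $\psi(x)$ holds, for every $x\in\Sigma^n$.

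The statement puts no bound on $k$, so blow-up during the rewriting is harmless; in fact, exactly as in \Cref{mar1-2022}, $k$ can be kept polynomial, since using $x_j^2=x_j$ on $\{0,1\}$ collapses each $\varphi_i$ to a multilinear monomial in at most $c$ variables, equal monomials combine, monomials that are identically $0$ vanish over $GF(2)$, and there are only $O(n^{c})$ multilinear monomials of degree $\le c$; thus $\psi$ may be taken to be a degree-$\le c$ multilinear polynomial over $GF(2)$ with $O(n^{c})$ terms. I expect the only point requiring care is the bookkeeping that pushing $+$-gates up does not increase the multiplicative depth, and hence that the degree bound $c=b^{d}$ is legitimate; the rest is routine.
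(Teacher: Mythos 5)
Your proposal is correct and follows essentially the same route as the paper's (much terser) proof: bound the number of variables any product can mention via the constant depth and bounded $\times$-fan-in, and use the distributive law of $\cdot$ over $+$ to fold all unbounded-fan-in $\bigoplus$ gates into a single top-level sum. Your elaboration of the degree bound $c=b^{d}$ and the remark that $k$ stays polynomial are consistent with what the paper asserts in \Cref{feb27-2022-later} and the remark following it.
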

		\begin{proof}
			A formula of bounded depth with bounded fan-in can only mention a bounded number of variables.
			All the unbounded fan-in occurrences of $\bigoplus$ can be folded into one using the distributive law of $\cdot$ and $+$.
		\end{proof}

		\begin{rem}
			In \Cref{feb27-2022-later} it follows that the formulas $\psi$ have size polynomial in $n$, much as in \Cref{mar1-2022}.
		\end{rem}

		We can show that $\{x:A_N(x)>1\}\in \SAC^0\setminus\OSAC^0$ and hence $\SAC^0\not\subseteq\OSAC^0$.
		\Cref{feb24-2022} resolves an open implication from the Complexity Zoo~\cite{Aaronson}.

		\begin{df}
			We define the \emph{degree} of a multilinear polynomial by
			\[
				\deg\left(\sum_{F\in\mathcal F}\prod_{i\in F} x_i\right)=\max\{\abs{F}:F\in\mathcal F\}.
			\]
		\end{df}
		\begin{lem}\label{feb26-2022-11am}
			A multilinear polynomial over the ring $(\mathbb Z/2\mathbb Z,+\cdot)$ is identically 0 as a function over $\mathbb Z/2\mathbb Z$ only if all the coefficients are 0.
		\end{lem}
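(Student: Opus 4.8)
The plan is to prove the contrapositive: if a multilinear polynomial $p=\sum_{F\in\mathcal F}\prod_{i\in F}x_i$ over $(\mathbb Z/2\mathbb Z,+,\cdot)$ has $\mathcal F\ne\emptyset$, i.e. at least one coefficient equal to $1$, then $p$ is not the zero function on $\{0,1\}^n$. First I would choose $F_0\in\mathcal F$ of minimal cardinality and consider the Boolean assignment $e\in\{0,1\}^n$ defined by $e_i=1$ if $i\in F_0$ and $e_i=0$ otherwise. For any index set $F$, the monomial $\prod_{i\in F}x_i$ evaluates at $e$ to $1$ precisely when $F\subseteq F_0$, and to $0$ otherwise.

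The key step is then the observation that among the $F\in\mathcal F$ the only one with $F\subseteq F_0$ is $F_0$ itself: from $F\subseteq F_0$ we get $\abs F\le\abs{F_0}$, and minimality of $\abs{F_0}$ within $\mathcal F$ forces $\abs F=\abs{F_0}$, hence $F=F_0$. Therefore exactly one monomial of $p$ survives the substitution $x=e$, and $p(e)=1\ne 0$ in $\mathbb Z/2\mathbb Z$. So $p$ is not identically zero as a function, as desired.

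An equivalent route, if a self-contained induction is preferred, is induction on the number of variables $n$: write $p=q+x_n r$ with $q,r$ multilinear in $x_1,\dots,x_{n-1}$; setting $x_n=0$ gives $q\equiv 0$, so by the induction hypothesis $q$ has all coefficients $0$; then setting $x_n=1$ gives $q+r\equiv 0$, hence $r\equiv 0$, and again by induction $r$ has all coefficients $0$, whence so does $p$. I do not expect a genuine obstacle here; the only point that deserves a word of care is that the distinct subsets $F\subseteq[n]$ give genuinely distinct monomials in $\mathbb Z/2\mathbb Z[x_1,\dots,x_n]/(x_i^2-x_i)$, so that ``the coefficients'' of a multilinear polynomial are well defined, after which the minimality argument pins every one of them down.
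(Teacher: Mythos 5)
Your proof is correct, but it takes a genuinely different route from the paper's. The paper argues by counting: it observes that the evaluation map from the set $P$ of formal multilinear polynomials in $n$ variables over $\mathbb Z/2\mathbb Z$ to Boolean functions is surjective (since $\neg a=a+1$ and $a\wedge b=ab$ let one express every Boolean function), and that $\abs{P}=2^{2^n}$ equals the number of Boolean functions; a surjection between finite sets of equal cardinality is a bijection, hence injective, and injectivity at $0$ is exactly the lemma. Your first argument instead produces an explicit witness: evaluating at the indicator vector of a minimal-cardinality $F_0\in\mathcal F$ annihilates every monomial of $p$ except the one indexed by $F_0$, so $p$ takes the value $1$ there; the minimality step ($F\subseteq F_0$ and $\abs{F}=\abs{F_0}$ force $F=F_0$) is airtight, and the edge case $F_0=\emptyset$ is handled correctly since the constant term survives at the all-zeros input. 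Your inductive variant via $p=q+x_n r$ is likewise sound. What your approach buys is constructiveness and self-containment: it needs neither the representability of every Boolean function as a multilinear polynomial over $GF(2)$ nor the count $2^{2^n}$, and it transfers verbatim to multilinear polynomials over any field (indeed any nontrivial commutative ring) evaluated on $\{0,1\}^n$, where the counting argument would break down. What the paper's argument buys in exchange is the full bijection between multilinear polynomials and Boolean functions in one stroke, i.e., existence and uniqueness of the algebraic normal form simultaneously.
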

		\begin{proof}
			Let $P$ be the set of (formal) multilinear polynomials in $n$ variables with coefficients in $\{0,1\}$.
			Let $F_P$ be the set of Boolean functions computed by elements of $F$.
			It suffices to show that two multilinear polynomials are equal as functions only if they are equal as (formal) polynomials,
			i.e., the map sending a formal polynomial to its function is one-to-one.
			Thus, it suffices to show that $\abs{F_P}\le\abs{P}$.

			Let $B$ be the set of all Boolean functions in $n$ variables.
			Each Boolean function may be expressed as a multilinear monomial over $\mathbb Z/2\mathbb Z$.
			To wit,
			\[
				\bot = 0,\quad\top = 1,\quad\neg a = a + 1,\quad\text{and}\quad a\wedge b = a \cdot b.
			\]
			Thus $B=F_P$.

			For each set $S\subseteq\{1,\dots,n\}$ there is a multilinear monomial $\prod_{i\in S}x_i$.
			Thus there are $2^n$ multilinear monomials in $n$ variables. Any subset $\mathcal S$ of these may be included in a multilinear polynomial
			\[
			\sum_{S\in\mathcal S}\prod_{i\in S}x_i.
			\]
			Thus $\abs{P}=2^{2^n}$.
			As it is well known that $\abs{B}=2^{2^n}$, we conclude
			\[
				\abs{F_P} = \abs{B} = 2^{2^n} = \abs{P}.\qedhere
			\]
		\end{proof}
		\begin{rem}
			\Cref{feb26-2022-11am} is a known result, but it should not be confused with the similar representation of Boolean functions as multilinear polynomials over $\mathbb R$.
			For instance, XOR is $x+y-xy$ over $\mathbb R$, but $x+y$ over $GF(2)$.
			Also, \Cref{feb26-2022-11am}  is somewhat sharp in that if we go beyond linear polynomials we quickly get a counterexample: $x^2+x$ is identically 0 as a function.
		\end{rem}
		\begin{thm}\label{feb24-2022}
			$\SAC^0\not\subseteq\OSAC^0$.
		\end{thm}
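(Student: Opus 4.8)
The plan is to pin down a single language realizing the separation, namely $L=\{x\in\{0,1\}^*:A_N(x)>1\}$, and to prove $L\in\SAC^0$ and $L\notin\OSAC^0$. The first step is a length-by-length description of $L$. If $A_N(w)\le 1$, witnessed by a one-state NFA $M$, then the sole state of $M$ is both initial and accepting (otherwise $L(M)=\emptyset$ and $M$ accepts nothing), and $M$ loops on exactly the letters of some set $S\subseteq\{0,1\}$, so $M$ accepts precisely the $\abs{S}^{\abs{w}}$ words of length $\abs w$ over $S$, each along a single path; uniqueness forces $\abs S=1$, whence $w=b^{\abs{w}}$ for a letter $b$. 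The converse is immediate. Hence for $n\ge 2$,
\[
	L\cap\{0,1\}^n=\{0,1\}^n\setminus\{0^n,1^n\},
\]
that is, $w\in L$ iff $w$ is non-constant.

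For $L\in\SAC^0$, observe that ``$w$ non-constant'' is computed by $\bigl(\bigvee_j x_j\bigr)\wedge\bigl(\bigvee_k\overline{x_k}\bigr)$: depth two, one bounded fan-in $\wedge$ gate fed by two unbounded fan-in $\vee$ gates, with negations only at the inputs. Distributing $\wedge$ over $\vee$ rewrites this as $\bigvee_{j,k}(x_j\wedge\overline{x_k})$, which is the normal form of \Cref{feb27-2022-8pm} with each disjunct mentioning two variables.

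The substance is $L\notin\OSAC^0$. Suppose $L\in\OSAC^0$; by \Cref{feb27-2022-later} there is a constant $c$ such that for every $n$ the set $L\cap\{0,1\}^n$ is computed by a $GF(2)$-formula $\psi_n=\bigoplus_{i=1}^{k}\varphi_i$ in which each $\varphi_i$ mentions at most $c$ variables. On $\{0,1\}$-inputs each $\varphi_i$ agrees with its multilinear reduction, a polynomial in at most $c$ variables and hence of degree at most $c$; and summing multilinear polynomials over $GF(2)$ cannot raise the degree, so $L\cap\{0,1\}^n$ is computed by a multilinear polynomial over $GF(2)$ of degree at most $c$. But since $0^n,1^n$ are the only non-members, the indicator of $L\cap\{0,1\}^n$ equals, over $GF(2)$,
\[
	1+\prod_{i=1}^n(1+x_i)+\prod_{i=1}^n x_i=\sum_{\emptyset\ne S\subsetneq\{1,\dots,n\}}\prod_{i\in S}x_i,
\]
which has degree $n-1$. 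By the uniqueness of multilinear representations over $GF(2)$ (\Cref{feb26-2022-11am}), these two polynomials coincide, forcing $n-1\le c$ for all $n$, a contradiction. Therefore $L\in\SAC^0\setminus\OSAC^0$, and $\SAC^0\not\subseteq\OSAC^0$.

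I expect the only delicate point to be the degree bookkeeping in the $\OSAC^0$ step: one must justify that all $\oplus$-gates can be folded to the top (exactly as in the proof of \Cref{feb27-2022-later}) and that each of the resulting bounded-size subformulas, though it may use higher powers of variables, collapses as a $GF(2)$-function to a multilinear polynomial of degree at most $c$. Granting this, the explicit computation of the indicator polynomial together with \Cref{feb26-2022-11am} finishes the argument, and the one-state NFA characterization of $\{w:A_N(w)\le 1\}$ is entirely elementary.
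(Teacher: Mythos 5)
Your proof is correct and uses the same machinery as the paper: fold the $\oplus$-gates to the top via \Cref{feb27-2022-later}, invoke uniqueness of multilinear $GF(2)$ representations (\Cref{feb26-2022-11am}), and derive a contradiction from unbounded degree. The only difference is the witness language: the paper's theorem uses the disjunction family $\bigvee_n$ (degree $n$) and defers $\{x: A_N(x)=1\}$ to \Cref{mar1-20222pm}, whereas you work directly with $\{x:A_N(x)>1\}$ (degree $n-1$), in effect proving the theorem and its corollary in one pass.
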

		\begin{proof}
			We will show that the family of disjunction functions
			\[
				\bigvee_n(x_1,\dots,x_n)=x_1\vee\dots\vee x_n
			\]
			is in $\SAC^0\setminus\OSAC^0$.
			To see that $\bigvee_n$ is in $\SAC^0$ is trivial: it is computed by the following constant-depth circuit.
			\[
				\xymatrix{
					x_1\ar[drr] & x_2\ar[dr]	& \dots & x_{n-1}\ar[dl]&x_n\ar[dll]\\
						&		& *+[Fo]{\vee}	&
				}
			\]
			It remains to show that the function
			$\bigvee_n$ is not in $\OSAC^0$. By de Morgan's law,
			\begin{equation}\label{eq:office}
				\bigvee_n (x_1,\dots,x_n)=1+\prod_{i=1}^n(1+x_i)=1+\sum_{F\subseteq [n]}x_F = \sum_{\emptyset\ne F\subseteq [n]}x_F
			\end{equation}
			where $x_F=\prod_{i\in F}x_i$, $x_{\emptyset}=1$. For instance, $\bigvee_2(x,y)=1+(xy+x+y+1)=xy+x+y$.

			By \Cref{feb26-2022-11am}, if $\bigvee_n$ is equal to the function expressed by a $\OSAC^0$ formula (circuit) then that formula expands to the polynomial in~\eqref{eq:office}.
			By \Cref{feb27-2022-later} an $\OSAC^0$ formula is a sum of terms $\sum_i\varphi_i$ (mod 2)
			where each $\varphi_i$ depends on a bounded number of variables.
			So when we express $\varphi_i$ as a multilinear polynomial, it will have bounded degree
			since
			\[
			\deg\left(\sum_i\varphi_i\right)\le\max_i\deg(\varphi_i).
			\]
			Since $\bigvee_n$ has degree $n$ by~\eqref{eq:office}, the result follows.
		\end{proof}
		\begin{cor}\label{mar1-20222pm}
			$\{x\in\{\mt 0,\mt 1\}^*: A_N(x)=1\}\not\in\OSAC^0$.
		\end{cor}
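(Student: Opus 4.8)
The plan is to identify the language exactly and then reuse the degree argument of \Cref{feb24-2022}. Write $L=\{x\in\{\mt 0,\mt 1\}^*:A_N(x)=1\}$. First I would pin down $L$ precisely. A $1$-state NFA $M$ must have its single state $q_0$ both initial and final, and all its transitions are self-loops; if $S\subseteq\{\mt 0,\mt 1\}$ is the set of letters carrying a self-loop, then $L(M)=S^*$, and for $n\ge 1$ there is exactly one length-$n$ path spelling each word of $S^n$, so the number of length-$n$ accepting paths of $M$ is $\abs{S}^n$. The uniqueness clause of \Cref{precise} thus forces $\abs{S}=1$ when $n\ge 1$, whence $A_N(x)=1$ exactly when $x$ is a nonempty power of a single letter. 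Therefore, for $n\ge 1$,
\[
	L\cap\{\mt 0,\mt 1\}^n=\{\mt 0^n,\mt 1^n\},
\]
i.e., restricted to length $n$, $L$ is the ``all bits equal'' function $f_n(x_1,\dots,x_n)$.

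Next I would compute the multilinear polynomial of $f_n$ over $GF(2)$. Since $f_n=1$ iff all $x_i$ agree,
\[
	f_n=\prod_{i=1}^n x_i+\prod_{i=1}^n(1+x_i)=\prod_{i=1}^n x_i+\sum_{F\subseteq[n]}x_F=\sum_{F\subsetneq[n]}x_F,
\]
because the two copies of $x_{[n]}$ cancel; here $x_F=\prod_{i\in F}x_i$. The $n$ monomials of size $n-1$ do not cancel, so this polynomial has degree exactly $n-1$.

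Finally, I would argue exactly as in \Cref{feb24-2022}: if $L\in\OSAC^0$, then by \Cref{feb27-2022-later} there is a constant $c$ and, for each $n$, a formula $\psi=\bigoplus_i\varphi_i$ computing $L\cap\{\mt 0,\mt 1\}^n$ in which every $\varphi_i$ mentions at most $c$ variables, so $\deg\psi\le c$; by \Cref{feb26-2022-11am} the formal multilinear polynomial of $\psi$ coincides with the one computed above, of degree $n-1$, which exceeds $c$ once $n>c+1$ --- a contradiction. I do not expect a genuine obstacle here, since all the needed machinery (\Cref{feb27-2022-later}, \Cref{feb26-2022-11am}, and the degree bookkeeping of \Cref{feb24-2022}) is already in place; the one step requiring real care is the first, the exact identification of $A_N^{-1}(1)$ over the binary alphabet, where the uniqueness clause of \Cref{precise} must be applied correctly. (One may also bypass recomputing the polynomial: substituting $x_n=\mt 0$ into $\psi$ leaves every term depending on at most $c$ variables, and $1+\psi|_{x_n=\mt 0}$ then computes $\bigvee_{n-1}$ on $x_1,\dots,x_{n-1}$, contradicting \Cref{feb24-2022} directly.)
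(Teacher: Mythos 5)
Your proof is correct and takes essentially the same route as the paper's: identify $L$ at length $n$ as $\{\mt 0^n,\mt 1^n\}$, express its indicator over $GF(2)$ as the multilinear polynomial $\prod_{i} x_i+\prod_{i}(1+x_i)$ of degree $n-1$, and conclude via \Cref{feb26-2022-11am} and \Cref{feb27-2022-later} exactly as in \Cref{feb24-2022}. The additional detail you supply --- the one-state NFA analysis justifying $A_N(x)=1$ iff $x$ is a power of a single letter, and the constant term $x_{\emptyset}=1$ that the paper's displayed identity silently drops --- only makes the argument more careful, not different.
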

		\begin{proof}
			We have $A_N(x)=1$ iff $x\in\{\mt 0^n,\mt 1^n\}$ (where $n=\abs{x}$), iff
			\[
			\prod_{i\in [n]}x_i + \prod_{i\in [n]} (x_i+1) = \sum_{\emptyset\ne F\subsetneq [n]} \prod_{i\in F}x_i = 1.
			\]
			Since $\sum_{\emptyset\ne F\subsetneq [n]} \prod_{i\in F}x_i$ has degree $n-1$, which is unbounded as $n\to\infty$, the result follows as in \Cref{feb24-2022}.
		\end{proof}
		While we have not settled whether $L_{2,3}\in\OSAC^0$, \Cref{mar1-20222pm} does show that $\OSAC^0$ is too limited to capture the computational complexity of $A_N$.
	\bibliographystyle{plain}
	\bibliography{CFL}
\end{document}